\newcommand{\ra}[1]{\renewcommand{\arraystretch}{#1}}
\providecommand{\tabularnewline}{\\}
\providecommand{\algorithmname}{Algorithm}
  \theoremstyle{plain}
  \newtheorem{lem}{\protect\lemmaname}
  \theoremstyle{plain}
\theoremstyle{plain}
\newtheorem{thm}{\protect\theoremname}
  \providecommand{\lemmaname}{Lemma}
  \providecommand{\propositionname}{Proposition}
\providecommand{\theoremname}{Theorem}
\begin{document}
\global\long\def\balpha{\boldsymbol{\alpha}}
\global\long\def\bK{\mathbf{K}}
\global\long\def\bC{\mathbf{C}}
\global\long\def\bI{\mathbf{I}}

\newcommand{\di}{{\,\mathrm{d}}}
\global\long\def\bx{\mathbf{x}}
\global\long\def\bX{\mathbf{X}}
\global\long\def\argmin{\arg\!\min}
\global\long\def\argmax{\arg\!\max}
\global\long\def\bbeta{\boldsymbol{\beta}}
\global\long\def\by{\mathbf{x^{\prime}}}

\title{\LARGE \textbf{ Flexible Expectile Regression in Reproducing Kernel Hilbert Spaces}}
\author{Yi Yang \thanks{Department of Mathematics and Statistics, McGill University},
Teng Zhang \thanks{Princeton University. Yi and Teng are joint first authors.},
Hui Zou \thanks{Corresponding author, School of Statistics, University of Minnesota (zouxx019@umn.edu)}}

\date{\today}

\maketitle

\begin{abstract}
Expectile, first introduced by Newey and Powell (1987) in the econometrics literature, has recently become increasingly popular in risk management and capital allocation for financial institutions due to its desirable properties such as coherence and elicitability. The current standard tool for expectile regression analysis is the multiple linear expectile regression proposed by Newey and Powell in 1987. The growing applications of expectile regression motivate us to develop a much more flexible nonparametric multiple expectile regression in a reproducing kernel Hilbert space. The resulting estimator is called KERE which has multiple advantages over the classical multiple linear expectile regression by incorporating non-linearity, non-additivity and complex interactions in the final estimator. The kernel learning theory of KERE is established. We develop an efficient algorithm inspired by majorization-minimization principle for solving the entire solution path of KERE. It is shown that the algorithm converges at least at a linear rate. 
Extensive simulations are conducted to show the very competitive finite sample performance of KERE. We further demonstrate the application of KERE by using personal computer price data.
\end{abstract}
\noindent {\bf Keywords:} Asymmetry least squares; Expectile regression; Reproducing kernel Hilbert space; MM principle.

\section{Introduction}

The \emph{expectile} introduced by \citet{Asymmetric_Powell} is becoming
an increasingly popular tool in risk management and capital allocation
for financial institutions. Let $Y$ be a random variable, 
the $\omega$-expectile of $Y$, denoted as $f_{\omega}$, is defined by
\begin{equation}
\omega=\frac{E\{|Y-f_{\omega}|I_{Y\leq f_{\omega}}\}}{E\{|Y-f_{\omega}|\}},
\qquad\omega\in(0,1).\label{eq:def1}
\end{equation}
In financial applications,  the expectile has been widely used as a tool for efficient estimation of the expected shortfall (ES) through a one-one mapping between the two \citep{taylor2008estimating,hamidi2014dynamic,xie2014varying}. More recently, many researchers started to advocate the use of the expectile as a favorable
alternative
to other two commonly used risk measures -- Value at Risk (VaR) and
ES, due to its desirable properties such as \emph{coherence} and \emph{elicitability} \citep{kuan2009assessing, gneiting2011making,ziegel2014coherence}.  VaR has been criticized mainly for two drawbacks: First, it does not reflect the magnitude of the extreme losses for the underlying risk as it is only determined by the probability of such losses; Second, VaR is not a coherent risk measure due to the lack of the \emph{sub-additive} property \citep{emmer2013best,embrechts2014academic}. Hence the risk of merging portfolios together could get worse than
adding the risks separately, which contradicts the notion that risk
can be reduced by diversification \citep{artzner1999coherent}. Unlike VaR, ES is coherent and it considers the
magnitude of the losses when the VaR is exceeded. However, a major problem with ES is that it cannot be reliably backtested in the sense
that competing forecasts of ES cannot be properly evaluated through
comparison with realized observations. \citet{gneiting2011making}
attributed this weakness to the fact that ES does not have \emph{elicitability}. \citet{ziegel2014coherence}
further showed that the expectile are the only risk measure that is both coherent and elicitable.

In applications we often need to estimate the conditional expectile of the response variable given a set of covariates. This is called expectile regression. Statisticians and Econometricians pioneered the study of expectile regression. Theoretical properties of the multiple linear expectile were firstly studied in \citet{Asymmetric_Powell} and \citet{Asymmetric_Efron}.
\citet{Asymmetric_Tong} studied a non-parametric estimator of conditional expectiles based on local linear polynomials with a one-dimensional covariate, and established the asymptotic property of the estimator.  A semiparametric expectile regression model relying on penalized splines is proposed by \citet{sobotka2012geoadditive}.
\citet{yang2015nonparametric} adopted the gradient tree boosting algorithm for expectile regression.

In this paper, we propose a flexible nonparametric expectile regression estimator constructed in a reproducing kernel Hilbert space (RKHS) \citep{wahba}.
Our contributions in this article are twofold: First, we extend the parametric expectile model to a fully nonparametric multiple regression setting and develop the corresponding kernel learning theory. Second, we propose an efficient algorithm that adopts the Majorization-Minimization principle for computing the entire solution path of  the kernel expectile regression. We provide  numerical convergence analysis for the algorithm. Moreover, we provide an accompanying R package that allows other researchers and practitioners to use the kernel expectile regression.

The rest of the paper is organized as follows. In Section 2 we present the kernel expectile regression and develop an asymptotic learning theory. Section 3 derives the fast algorithm for solving the solution paths of the kernel expectile regression.  The numerical convergence of the algorithm is examined.  In Section 4 we use simulation models to show the high prediction accuracy of the kernel expectile regression. We analyze the personal computer price data in Section 5. The technical proofs are relegated to an appendix.

\section{Kernel Expectile Regression}
\subsection{Methodology}

\citet{Asymmetric_Powell}  showed that the $\omega$-expectile
$f_{\omega}$ of  $Y$ has an equivalent definition given by
\begin{equation}
f_{\omega}=\underset{f}{\arg\min}E\{\phi_{\omega}(Y-f) \},
\end{equation}
where
\begin{eqnarray}\label{eqdef1}
\phi_{\omega}(t)=\begin{cases}
(1-\omega)t^{2} & t\leq 0,\\
\omega t^{2} & t>0.
\end{cases}
\end{eqnarray}
Consequently, \citet{Asymmetric_Powell} showed that the $\omega$-expectile
$f_{\omega}$ of $Y$ given the set of covariates
$X=\mathbf{x}$, denoted by $f_{\omega}(\mathbf{x})$, can be defined as
\begin{equation}\label{eqdef2}
f_{\omega}(\mathbf{x})=\underset{f}{\arg\min}E\{\phi_{\omega}(Y-f)\mid X=\mathbf{x}\}.
\end{equation}
\citet{Asymmetric_Powell}  developed the multiple linear expectile regression based on \eqref{eqdef2}.
Given $n$ random observations
$(\bx_{1},y_{1}),\cdots,(\bx_{n},y_{n})$ with $\bx_{i}\in\mathbb{R}^{p}$
and $y_{i}\in\mathbb{R}$,
\citet{Asymmetric_Powell}  proposed the following formulation:
\begin{equation}
(\hat{\boldsymbol{\beta}},\hat{\beta}_0)=\underset{(\boldsymbol{\beta},\beta_0)}{\arg\min}\frac{1}{n}\sum_{i=1}^{n}\phi_{\omega}(y_{i}-\mathbf{x}_i^{\intercal}\boldsymbol{\beta}-\beta_0).\label{eq:optlinear}
\end{equation}
Then the estimated conditional  $\omega$-expectile is $\mathbf{x}_i^{\intercal}\hat{\boldsymbol{\beta}}+\hat{ \beta_0}.$
\citet{Asymmetric_Efron} proposed an efficient algorithm for computing \eqref{eq:optlinear}.

The linear expectile estimator can be too restrictive in many real applications. Researchers have also considered more flexible expectile regression estimators. For example, \citet{Asymmetric_Tong} studied a local linear-polynomial expectile estimator with a one-dimensional covariate.
However, the local fitting approach is not suitable  when the dimension of explanatory variables is more than five. This limitation of local smoothing motivated \citet{yang2015nonparametric} to develop a nonparametric expectile regression estimator based on the gradient tree boosting algorithm. The tree-boosted expectile regression tries to minimize the empirical expectile loss:
\begin{equation}
\underset{f\in{\cal F}}{\min}\frac{1}{n}\sum_{i=1}^{n}\phi_{\omega}(y_{i}-f(\mathbf{x}_{i})),\label{eq:opt}
\end{equation}
 where each candidate function $f\in \mathcal F$ is assumed to be an ensemble of regression trees.

In this article, we consider another nonparametric approach to the multiple expectile regression. To motivate our method, let us first look at the special expectile regression with $\omega=0.5$. It is easy to see from \eqref{eqdef1} and \eqref{eqdef2} that if $\omega=0.5$, expectile regression actually reduces to ordinary conditional mean regression. A host of flexible regression methods have been well-studied for the conditional mean regression, such as generalized additive model, regression trees, boosted regression trees, and function estimation in a reproducing kernel Hilbert space (RKHS). \citet{friedman2009elements} provided excellent introductions to all these methods. In particular, mean regression in a RKHS has a long history and a rich success record \citep{wahba}. So in the present work we propose the kernel expectile regression in a RKHS.

Denote by $\mathbb{H}_{K}$ the Hilbert space generated by a positive definite kernel $K$. By the Mercer's theorem, kernel $K$ has an eigen-expansion $K(\mathbf{x},\mathbf{x^{\prime}})=\sum_{i=1}^{\infty}\nu_{i}\varphi_{i}(\mathbf{x})\varphi_{i}(\mathbf{x^{\prime}})$ with $\nu_{i}\geq0$ and  $\sum_{i=1}^{\infty}\nu_{i}^{2}<\infty$. The
function $f$ in $\mathbb{H}_{K}$ can be expressed as an expansion of these eigen-functions $f(\mathbf{x})=\sum_{i=1}^{\infty}c_{i}\varphi_{i}(\mathbf{x})$ with the kernel induced squared norm $\|f\|_{\mathbb{H}_{K}}^{2}\equiv\sum_{i=1}^{\infty}c_{i}^{2}/\nu_{i}<\infty.$ 
Some most widely used kernel functions
are
\begin{itemize}
\item Gaussian RBF kernel $K(\mathbf{x},\mathbf{x^{\prime}})=\exp\left(\frac{-\|\mathbf{x}-\mathbf{x^{\prime}}\|^{2}}{\sigma^{2}}\right),$
\item Sigmoidal kernel $K(\mathbf{x},\mathbf{x^{\prime}})=\tanh(\kappa\left\langle \mathbf{x},\mathbf{x^{\prime}}\right\rangle +\theta),$
\item Polynomial kernel $K(\mathbf{x},\mathbf{x^{\prime}})=(\left\langle \mathbf{x},\mathbf{x^{\prime}}\right\rangle +\theta)^d.$
\end{itemize}
Other kernels can be found in \citet{smola1998connection} and \citet{friedman2009elements}.

Given $n$ observations $\{(\bx_{i},y_{i})\}_{i=1}^n$, the kernel expectile regression estimator (KERE) is defined as
\begin{equation}
(\hat{f}_{n}(\mathbf{x}),\hat{\alpha}_{0})=\arg\min_{f\in\mathbb{H}_{K},\alpha_{0}\in\mathbb{R}}\sum_{i=1}^{n}\phi_{\omega}(y_{i}-\alpha_{0}-f(\bx_{i}))+\lambda\|f\|_{\mathbb{H}_{K}}^{2},\label{eq:setup1}
\end{equation}
where $\mathbf{x}_{i}\in\mathbb{R}^{p}$, $\alpha_{0}\in\mathbb{R}$. The estimated conditional  $\omega$-expectile is $\hat{\alpha}_0+\hat{f}_{n}(\mathbf{x}).$ Sometimes, one can absorb the intercept term into the nonparametric function $f$. We keep the intercept term in order to make a direct comparison to the multiple linear expectile regression.

 
Although \eqref{eq:setup1} is often an optimization problem in an infinite-dimensional space, depending on the choice of the kernel, the representer theorem \citep{wahba} ensures that the solution to \eqref{eq:setup1} always lies in a finite-dimensional subspace spanned by kernel functions on observational data, i.e.,
\begin{equation}
f(\bx)=\sum_{i=1}^{n}\alpha_{i}K(\bx_{i},\bx),\label{eq:sol}
\end{equation}
for some $\{\alpha_{i}\}_{i=1}^{n}\subset\mathbb{R}$.

By \eqref{eq:sol} and the reproducing property of RKHS \citep{wahba}  we have
\begin{equation}
\|f\|_{\mathbb{H}_{K}}^{2}=\sum_{i=1}^{n}\sum_{j=1}^{n}\alpha_{i}\alpha_{j}K(\bx_{i},\bx_{j}).\label{eq:sol2}
\end{equation}

Based on \eqref{eq:sol} and \eqref{eq:sol2} we can rewrite the minimization
problem \eqref{eq:setup1} in a finite-dimensional space
\begin{equation}
\{\hat{\alpha}_{i}\}_{i=0}^{n}=\arg\min_{\{\alpha_{i}\}_{i=0}^{n}}\sum_{i=1}^{n}\phi_{\omega}\left(y_{i}-\alpha_{0}-\sum_{j=1}^{n}\alpha_{j}K(\bx_{i},\bx_{j})\right)+\lambda\sum_{i=1}^{n}\sum_{j=1}^{n}\alpha_{i}\alpha_{j}K(\bx_{i},\bx_{j}).\label{eq:setup2}
\end{equation}
The corresponding KERE estimator is 
$ \hat{\alpha}_0+\sum_{i=1}^{n} \hat \alpha_{i}K(\bx_{i},\bx)$.


The computation of KERE is based on \eqref{eq:setup2} and we use both \eqref{eq:setup1} and \eqref{eq:setup2} for the theoretical analysis of KERE.

\subsection{Kernel learning theory}

In this section we develop a kernel learning theory for KERE. We first discuss the criterion for evaluating an estimator in the context of expectile regression. Given the loss function $\phi_{\omega}$,  the risk is
$
{\cal R}(f,\alpha_{0})=E_{(\bx,y)}\phi_{\omega}(y-\alpha_{0}-f(\bx)).
$
It is argued that ${\cal R}(f,\alpha_{0})$ is a more appropriate evaluation measure in practice than the squared error risk defined as $E_{\bx}\|f(\bx)+\alpha_0-f^*_{\omega}(\bx) \|^2$, where $f^*_{\omega}$ is the true conditional expectile of $Y$ given $X=\bx$. The reason is simple:  Let $\hat f, \hat \alpha_0$ be any estimator based on the training data.  By law of large number we see that
\[
{\cal R}(\hat f,\hat \alpha_{0})=E_{\{y_j,\bx_j\}^m_{j=1}}\frac{1}{m}\sum^m_{j=1}\phi_{\omega}(y_j-\hat \alpha_{0}-\hat f(\bx_j)),
\]
and
\[
{\cal R}(\hat f,\hat \alpha_{0})=\lim_{m \rightarrow \infty}\frac{1}{m}\sum^m_{j=1}\phi_{\omega}(y_j-\hat \alpha_{0}-\hat f(\bx_j)),
\]
where $\{(\bx_j,y_j)\}^m_{j=1}$ is another independent test sample. Thus, one can use techniques such as cross-validation to estimate ${\cal R}(f,\alpha_{0})$. Additionally, the squared error risk depends on the function $f^*_{\omega}(\bx)$, which is usually unknown. Thus, we prefer to use ${\cal R}(\hat f,\hat \alpha_{0})$ over the squared error risk. Of course, if we assume 
a classical regression model (when $\omega=0.5$) such as $y=f(\bx)+\textrm{error}$, where the error is independent of $\bx$ with mean zero and constant variance, ${\cal R}(\hat f,\hat \alpha_{0})$ then just equals the squared error risk plus a constant. Unfortunately, such equivalence breaks down for other values of $\omega$ and more general models.

After choosing the risk function,  the goal is to minimize the risk. Since typically the estimation is done in a function space, the minimization is carried out in the chosen function space. In our case, the function space is RKHS generated by a kernel function $K$. Thus, the ideal risk is defined as
\[
{\cal R}^*_{f, \alpha_{0}}=\inf_{f\in\mathbb{H}_K,\alpha_0\in\mathbb{R}}{\cal R}(f, \alpha_{0}).
\]

Consider the kernel expectile regression estimator $(\hat{f},\hat{\alpha}_{0})$ as defined in \eqref{eq:setup1} based on a training sample $D_{n}$, where $D_{n}=\{(\bx_{i},y_{i})\}_{i=1}^n$ are i.i.d. drawn from an unknown distribution. The observed risk of KERE is
\[
{\cal R}(\hat{f},\hat\alpha_{0})=E_{(\bx,y)}\phi_{\omega}(y-\hat\alpha_{0}-\hat{f}(\bx)).
\]
It is desirable to show that ${\cal R}(\hat{f},\hat\alpha_{0})$ approaches the ideal risk ${\cal R}^*_{f, \alpha_{0}}$.

It is important to note that $ {\cal R}(\hat{f},\hat\alpha_{0})$ is a random quantity that depends on the training sample $D_{n}$. So it is not the usual risk function which is deterministic. However, we can consider the expectation of
$ {\cal R}(\hat{f},\hat\alpha_{0})$ and call it \emph{expected observed risk}. The formal definition is given as follows
\begin{equation}
\text{Expected observed risk:} \quad E_{D_n}{\cal R}(\hat{f},\hat{\alpha}_{0})=E_{D_{n}}\big\{E_{(\bx,y)}\phi_{\omega}(y-\hat{\alpha}_{0}-\hat{f}(\bx))\big\}.\label{eq:expobrisk}
\end{equation}

Our goal is to show that ${\cal R}(\hat{f},\hat\alpha_{0})$ converges to ${\cal R}^*_{f, \alpha_{0}}.$ We achieve this by showing that the expected observed risk converges to the ideal risk, i.e.,
$\lim_{n \rightarrow \infty } E_{D_n}{\cal R}(\hat{f},\hat{\alpha}_{0})={\cal R}^*_{f, \alpha_{0}}$.
By definition, we always have
${\cal R}(\hat{f},\hat\alpha_{0}) \ge {\cal R}^*_{f, \alpha_{0}}.
$ Then by Markov inequality, for any $\varepsilon>0$
$$
P\Big({\cal R}(\hat{f},\hat{\alpha}_{0}) - {\cal R}^*_{f, \alpha_{0}} >\varepsilon \Big) \le \frac{E_{D_n}{\cal R}(\hat{f},\hat{\alpha}_{0})- {\cal R}^*_{f, \alpha_{0}} }{\varepsilon} \rightarrow 0.
$$

The rigorous statement of our result is as follows:
\begin{thm}\label{thm:asymptotic}
Let $M=\sup_{\bx}K(\bx,\bx)^{1/2}$. 
Assume $M<\infty$ and $E y^2<D<\infty$ where $M$ and $D$ are two constants.
If $\lambda$ is chosen such that
as $n\rightarrow\infty$, $\lambda/n^{2/3}\rightarrow\infty$, $\lambda/n\rightarrow 0$, then we have
\[
E_{D_n}{\cal R}(\hat{f},\hat{\alpha}_{0})\rightarrow {\cal R}^*_{f, \alpha_{0}} \quad\text{as $n\rightarrow\infty$},
\]
and hence
\[
{\cal R}(\hat{f},\hat{\alpha}_{0}) - {\cal R}^*_{f, \alpha_{0}}  \rightarrow 0 \ \textrm{in probability}.
\]
\end{thm}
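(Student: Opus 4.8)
The plan is to control the excess risk ${\cal R}(\hat f,\hat\alpha_0)-{\cal R}^*_{f,\alpha_0}$ in expectation through the classical split into estimation and approximation error, and then invoke the Markov inequality displayed just before the statement to pass to convergence in probability. Write the empirical risk as $\ell_n(f,\alpha_0)=\frac1n\sum_{i=1}^n\phi_\omega(y_i-\alpha_0-f(\bx_i))$, so that \eqref{eq:setup1} minimizes $\ell_n(f,\alpha_0)+\frac{\lambda}{n}\|f\|_{\mathbb{H}_K}^2$. Fix $\varepsilon>0$ and, using the definition of the infimum ${\cal R}^*_{f,\alpha_0}$, pick a reference pair $(f^\varepsilon,\alpha_0^\varepsilon)\in\mathbb{H}_K\times\mathbb{R}$ with ${\cal R}(f^\varepsilon,\alpha_0^\varepsilon)\le{\cal R}^*_{f,\alpha_0}+\varepsilon$. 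I would then decompose
\[
{\cal R}(\hat f,\hat\alpha_0)-{\cal R}^*_{f,\alpha_0}\le\underbrace{\big[{\cal R}(\hat f,\hat\alpha_0)-\ell_n(\hat f,\hat\alpha_0)\big]}_{\text{(I)}}+\underbrace{\big[\ell_n(\hat f,\hat\alpha_0)-\ell_n(f^\varepsilon,\alpha_0^\varepsilon)\big]}_{\text{(II)}}+\underbrace{\big[\ell_n(f^\varepsilon,\alpha_0^\varepsilon)-{\cal R}(f^\varepsilon,\alpha_0^\varepsilon)\big]}_{\text{(III)}}+\varepsilon.
\]
Terms (II) and (III) are the easy ones: $(f^\varepsilon,\alpha_0^\varepsilon)$ is fixed, so $E_{D_n}\text{(III)}=0$; and by optimality of $(\hat f,\hat\alpha_0)$ for the regularized objective together with $\ell_n\ge0$, term (II) is bounded by $\frac{\lambda}{n}\|f^\varepsilon\|_{\mathbb{H}_K}^2$, which tends to $0$ since $\lambda/n\to0$ and $\|f^\varepsilon\|_{\mathbb{H}_K}$ is fixed.

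The crux is an a priori bound placing $(\hat f,\hat\alpha_0)$ in a controllable class. Comparing the regularized objective at $(\hat f,\hat\alpha_0)$ with its value at $(f^\varepsilon,\alpha_0^\varepsilon)$ and using $\ell_n\ge0$ gives $\frac{\lambda}{n}\|\hat f\|_{\mathbb{H}_K}^2\le\ell_n(f^\varepsilon,\alpha_0^\varepsilon)+\frac{\lambda}{n}\|f^\varepsilon\|_{\mathbb{H}_K}^2$, whose expectation is $O(1)$ because $E_{D_n}\ell_n(f^\varepsilon,\alpha_0^\varepsilon)={\cal R}(f^\varepsilon,\alpha_0^\varepsilon)$. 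Hence $\|\hat f\|_{\mathbb{H}_K}^2\lesssim n/\lambda=:R_n^2$, and by the reproducing property $|\hat f(\bx)|\le M\|\hat f\|_{\mathbb{H}_K}\le MR_n$. The intercept is then pinned down through the loss: $\phi_\omega(t)\ge\min(\omega,1-\omega)t^2$ converts the bound on $\ell_n(\hat f,\hat\alpha_0)$ into a bound on $\frac1n\sum_i(y_i-\hat\alpha_0-\hat f(\bx_i))^2$, which, combined with $E|\bar y|\le\sqrt{E y^2}\le\sqrt D$, yields $|\hat\alpha_0|\lesssim MR_n$ in expectation. Thus $(\hat f,\hat\alpha_0)$ lies, up to constants, in the class ${\cal F}_n=\{(f,\alpha_0):\|f\|_{\mathbb{H}_K}\le CR_n,\ |\alpha_0|\le CMR_n\}$.

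The main work, and the main obstacle, is term (I), which I would bound by the uniform deviation $E_{D_n}\sup_{(f,\alpha_0)\in{\cal F}_n}[{\cal R}(f,\alpha_0)-\ell_n(f,\alpha_0)]$ via symmetrization and the Rademacher complexity of the kernel ball, which for a kernel with $\sup_\bx K(\bx,\bx)^{1/2}=M$ is at most $CR_n M/\sqrt n$, uniformly over the kernel — this dimension-free estimate is precisely what lets the theorem hold for any bounded kernel. The difficulty is that $\phi_\omega$ is an asymmetric quadratic: it is neither bounded nor globally Lipschitz, its local Lipschitz constant over the range of residuals scales like $|y|+MR_n$, and the sign indicator hidden in $\phi_\omega$ multiplies the unbounded term $y^2$, while $y$ is assumed only to have a finite second moment. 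I would handle this by truncating the response at a level $T_n\to\infty$, applying the contraction principle to the truncated loss (Lipschitz constant $O(T_n+MR_n)$) and bounding the truncation remainder by $E[y^2\mathbf{1}(|y|>T_n)]\to0$, which is guaranteed by $E y^2<D$. The resulting bound is of order $(T_n+MR_n)R_n M/\sqrt n$ plus the vanishing tail, and balancing the truncation level against $R_n^2\asymp n/\lambda$ is exactly what forces $\lambda$ to grow faster than $n^{2/3}$, making term (I) vanish. Collecting the three terms gives $\limsup_n\big[E_{D_n}{\cal R}(\hat f,\hat\alpha_0)-{\cal R}^*_{f,\alpha_0}\big]\le\varepsilon$ for every $\varepsilon>0$, hence $E_{D_n}{\cal R}(\hat f,\hat\alpha_0)\to{\cal R}^*_{f,\alpha_0}$; convergence in probability then follows immediately from the Markov inequality stated above. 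The genuinely hard part throughout is the interplay between this sign indicator and the heavy-tailed response, which defeats naive bounded-loss concentration and dictates the $n^{2/3}$ threshold.
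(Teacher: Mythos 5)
Your decomposition into (I)$+$(II)$+$(III), with a Rademacher-complexity treatment of (I), is a genuinely different route from the paper's. The paper never forms a supremum over a function class: it proves the theorem by leave-one-out \emph{stability}, first rewriting the expected observed risk exactly as an expected leave-one-out error (Lemma \ref{lemma:leaveone}), then showing via Lemmas \ref{lemma:bound_objective} and \ref{lemma:alphaibound} that deleting one observation perturbs the fitted function by a controllable amount; the hypothesis $\lambda/n^{2/3}\to\infty$ is exactly what makes the resulting expected stability remainder, of order $\frac{D}{\sqrt{\lambda}}\left(\frac{n}{\lambda}+1\right)$, vanish. Your terms (II) and (III) are handled correctly and mirror the paper's use of the reference pair $(f^{*}_{\varepsilon},\alpha^{*}_{0\,\varepsilon})$. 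The genuine gap is in the localization on which your treatment of term (I) rests. The optimality comparison gives $\frac{\lambda}{n}\|\hat f\|_{\mathbb{H}_K}^2\le \ell_n(f^{\varepsilon},\alpha_0^{\varepsilon})+\frac{\lambda}{n}\|f^{\varepsilon}\|_{\mathbb{H}_K}^2$, whose right-hand side is a \emph{random variable} with finite mean, not a constant. So the radius you actually obtain is $\hat\rho^{2}=\frac{n}{\lambda}\big(\ell_n(f^{\varepsilon},\alpha_0^{\varepsilon})+o(1)\big)$, random and unbounded, and the same defect afflicts your bound on $|\hat\alpha_0|$. ``Bounded in expectation'' does not place $(\hat f,\hat\alpha_0)$ in a fixed class ${\cal F}_n$, and the symmetrization bound $E\sup_{{\cal F}_n}[{\cal R}-\ell_n]$ is only meaningful for a deterministic class.

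The obvious repair --- work on the event $A_n=\{\ell_n(f^{\varepsilon},\alpha_0^{\varepsilon})\le K_n\}$, where your argument goes through with $R_n^{2}\asymp K_n\, n/\lambda$, and handle $A_n^{c}$ separately --- fails under the theorem's weak moment assumption, and this is precisely the heavy-tail difficulty you flag. On $A_n^{c}$ the quantity you must control, $E\big[{\cal R}(\hat f,\hat\alpha_0)\mathbf{1}_{A_n^{c}}\big]$, contains a term of size $\frac{n}{\lambda}E\big[\ell_n(f^{\varepsilon},\alpha_0^{\varepsilon})\mathbf{1}_{A_n^{c}}\big]$ with $\frac{n}{\lambda}\to\infty$, so you need $E\big[\ell_n(f^{\varepsilon},\alpha_0^{\varepsilon})\mathbf{1}_{A_n^{c}}\big]=o(\lambda/n)$: a \emph{rate} for a truncated first moment. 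Uniform integrability --- which is all that $Ey^{2}<D$ buys, since the loss itself then has only a finite first moment --- gives convergence with no rate, and Cauchy--Schwarz would require $Ey^{4}<\infty$. Making your route rigorous therefore needs a peeling or offset (self-normalized) empirical-process argument over shells of radii, which your sketch does not contain; that, rather than the contraction/truncation step you identify, is the hard part of this approach. The paper's stability proof avoids the issue altogether: it takes no suprema and introduces no indicator events, so every bound remains quadratic in the $y_i$'s and $Ey^{2}<D$ suffices. A smaller point: your claim that balancing the truncation level against $R_n^{2}\asymp n/\lambda$ ``forces'' the $n^{2/3}$ threshold is inaccurate. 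Carrying out your own balance (choose $T_n\gg R_n$ and require $(T_n+MR_n)R_nM/\sqrt n\to 0$) needs only $\lambda/\sqrt n\to\infty$, a weaker condition than the theorem's; in the paper the exponent $2/3$ arises from the stability constants, not from a bias--complexity trade-off.
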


The Gaussian kernel is perhaps the most popular kernel for nonlinear learning. For the Gaussian kernel $K(\bx,\by)=\exp(-\|\bx-\by\|^2/c)$, we have $M=1$. For any radial kernel with the form
$K(\bx,\by)=h(\|\bx-\by\|)$ where $h$ is a smooth decreasing function, we see $M=h(0)^{\frac{1}{2}}$ which is finite as long as $h(0)<\infty$.

\section{Algorithm\label{sec:algorithm}}

\subsection{Derivation}
Majorization-minimization (MM) algorithm is a very successful technique for solving a wide range of statistical models \citep{lange2000optimization,hunter2004tutorial,MM08,zhou2010mm,lange2014mm}. In this section, we develop an algorithm inspired by  MM principle for solving the optimization problem \eqref{eq:setup2}.  Note that the loss function $\phi_{\omega}$ in \eqref{eq:setup2}
does not have the second derivative. We adopt the MM principle to find the minimizer by
iteratively minimizing a surrogate function that majorizes the objective
function  in \eqref{eq:setup2}.

To further simplify the notation we write $\balpha=(\alpha_{0},\alpha_{1},\alpha_{2},\cdots,\alpha_{n})^{\intercal}$,
and
\[
\bK_{i}=\left(1,K(\bx_{i},\bx_{1}),\ldots,K(\bx_{i},\bx_{n})\right),
\qquad\bK=\left(\begin{array}{ccc}
K(\bx_{1},\bx_{1}) & \cdots & K(\bx_{1},\bx_{n})\\
\vdots & \ddots & \vdots\\
K(\bx_{n},\bx_{1}) & \cdots & K(\bx_{n},\bx_{n})
\end{array}\right),
\]
\[
\bK_{0}=\left(\begin{array}{cc}
0 & \mathbf{0}_{n\times1}^{\intercal}\\
\mathbf{0}_{n\times1} & \mathbf{K}
\end{array}\right).
\]
Then \eqref{eq:setup2} is simplified to a minimization problem as
\begin{equation}
\widehat{\balpha}=\argmin_{\balpha}F_{\omega, \lambda}(\balpha),\label{eq:obj}
\end{equation}
\begin{equation}
F_{\omega, \lambda}(\balpha)=\sum_{i=1}^{n}\phi_{\omega}\left(y_{i}-\bK_{i}\balpha\right)+\lambda\balpha^{\intercal}\bK_{0}\balpha,\label{eq:floss}
\end{equation}
where $\omega$ is given for computing the corresponding level of the conditional expectile. We also assume that $\lambda$ is given for the time being. A smart algorithm for computing the solution for a sequence of $\lambda$ will be studied in Section \ref{sec:implementation}.

Our approach is to minimize \eqref{eq:obj} by
iteratively update $\balpha$ using the minimizer of a majorization function of $F_{\omega, \lambda}(\balpha)$. Specifically, at the $k$-th step of the algorithm,
where $k=0,1,2,\ldots$, assume that $\boldsymbol{\alpha}^{(k)}$
is the current value of $\balpha$ at iteration $k$, we find a majorization
function $Q(\balpha\mid\balpha^{(k)})$ for $F_{\omega, \lambda}(\balpha)$ at current  $\balpha^{(k)}$ that satisfies
\begin{alignat}{1}
Q(\balpha\mid\balpha^{(k)}) > & F_{\omega, \lambda}(\balpha)\quad\mathrm{when\ }\balpha \neq \balpha^{(k)},\label{eq:mathdef1}\\
Q(\balpha\mid\balpha^{(k)}) = & F_{\omega, \lambda}(\balpha) \quad\mathrm{when\ }\balpha = \balpha^{(k)}.\label{eq:mathdef2}
\end{alignat}
Then we update $\balpha$ by minimizing $Q(\balpha\mid\balpha^{(k)})$
rather than the actual objective function $F_{\omega, \lambda}(\balpha)$:
\begin{equation}
\balpha^{(k+1)}=\argmin_{\balpha}Q(\balpha\mid\balpha^{(k)}).\label{eq:majobj}
\end{equation}
To construct the majorization function $Q(\balpha\mid\balpha^{(k)})$
for $F_{\omega, \lambda}(\balpha)$ at the $k$-th iteration, we use the following lemma:
\begin{lem}
\label{lem:lips}The expectile loss $\phi_{\omega}$ has a Lipschitz continuous derivative $\phi^{\prime}_{\omega}$, i.e.
\begin{equation}
|\phi^{\prime}_{\omega}(a)-\phi^{\prime}_{\omega}(b)|\leq L|a-b|\qquad\forall a,b\in\mathbb{R},\label{eq:lips}
\end{equation}
where $L=2\max(1-\omega,\omega)$. This further implies that $\phi_{\omega}$ has a quadratic upper bound
\begin{equation}
\phi_{\omega}(a)\le\phi_{\omega}(b)+\phi^{\prime}_{\omega}(b)(a-b)+\frac{L}{2}|a-b|^{2}\qquad\forall a,b\in\mathbb{R}.\label{eq:upper_bound}
\end{equation}
Note that ``$=$'' is taken only  when  $a=b$.
\end{lem}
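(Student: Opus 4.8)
The plan is to differentiate $\phi_\omega$ explicitly, read off the Lipschitz constant from its one-sided slopes, and then deduce the quadratic upper bound \eqref{eq:upper_bound} from the one-dimensional descent lemma. Differentiating each branch of \eqref{eqdef1} gives $\phi'_\omega(t)=2(1-\omega)t$ for $t\le 0$ and $\phi'_\omega(t)=2\omega t$ for $t>0$. The first thing I would check is that these agree at the kink: both vanish at $t=0$, so $\phi_\omega\in C^1(\mathbb{R})$ and $\phi'_\omega$ is a continuous, piecewise-linear, monotone increasing function with slope $2(1-\omega)$ on $(-\infty,0)$ and $2\omega$ on $(0,\infty)$. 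Both slopes are at most $L=2\max(1-\omega,\omega)$, which is exactly the candidate Lipschitz constant.

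To prove \eqref{eq:lips} I would split on the signs of $a$ and $b$. When they lie on the same side of $0$, $\phi'_\omega$ is linear along the segment joining them and the increment is the relevant slope times $|a-b|$, hence at most $L|a-b|$. The only case needing a moment's thought is when $a$ and $b$ straddle the kink, say $a\le 0<b$: there $\phi'_\omega(b)-\phi'_\omega(a)=2\omega b+2(1-\omega)(-a)$ with $b\ge 0$ and $-a\ge 0$, so bounding each coefficient by $L$ gives $\le L\big(b+(-a)\big)=L|a-b|$. A uniform route that dispenses with the casework is to note that $\phi'_\omega$ is absolutely continuous with $\phi''_\omega(t)\in\{2(1-\omega),2\omega\}$ for $t\neq 0$, whence $|\phi'_\omega(a)-\phi'_\omega(b)|=\big|\int_b^a\phi''_\omega(t)\,\mathrm{d}t\big|\le L|a-b|$; I would likely present this form since it handles the kink automatically. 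Given \eqref{eq:lips}, the quadratic bound \eqref{eq:upper_bound} is the standard descent lemma: writing $\phi_\omega(a)-\phi_\omega(b)=\int_0^1\phi'_\omega\big(b+s(a-b)\big)(a-b)\,\mathrm{d}s$ and subtracting $\phi'_\omega(b)(a-b)$, the remainder $\int_0^1\big[\phi'_\omega(b+s(a-b))-\phi'_\omega(b)\big](a-b)\,\mathrm{d}s$ is bounded via \eqref{eq:lips} by $L|a-b|^2\int_0^1 s\,\mathrm{d}s=\tfrac{L}{2}|a-b|^2$.

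The step I expect to be the main obstacle is not either inequality but the strictness claim that equality holds only at $a=b$. The descent-lemma bound is tight exactly when the whole segment from $b$ to $a$ remains on the quadratic piece whose curvature equals $L$, namely the heavier side (where the slope is $2\max(1-\omega,\omega)$); on that piece the second-order expansion is exact, so the inequality reduces to equality even for $a\neq b$. A careful analysis thus shows the strict inequality whenever the segment crosses the kink or lies on the lighter side, and I would state the equality condition with this qualification rather than unconditionally, since the clean ``$a=b$'' statement requires $\omega\neq\tfrac12$ together with the segment not lying entirely on the maximal-curvature branch.
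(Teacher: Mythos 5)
Your proof is correct, and in one respect it is more careful than the paper's. For the Lipschitz claim \eqref{eq:lips} you and the paper do essentially the same thing: the paper writes out the four sign cases for $(a,b)$ and bounds each one, which is exactly your casework; your integral variant $|\phi^{\prime}_{\omega}(a)-\phi^{\prime}_{\omega}(b)|=\left|\int_{b}^{a}\phi^{\prime\prime}_{\omega}(t)\,\mathrm{d}t\right|\le L|a-b|$ is a clean equivalent. Where you genuinely diverge is the quadratic bound \eqref{eq:upper_bound}: you integrate the first-order remainder (the descent lemma), while the paper first upgrades \eqref{eq:lips} to the monotonicity statement $(\phi^{\prime}_{\omega}(a)-\phi^{\prime}_{\omega}(b))(a-b)\le L|a-b|^{2}$, observes that $\varphi_{\omega}(a)=(L/2)a^{2}-\phi_{\omega}(a)$ therefore has a nondecreasing derivative and is convex, and reads \eqref{eq:upper_bound} off the first-order convexity condition for $\varphi_{\omega}$. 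Both routes are standard and equally short; yours costs one integration, the paper's costs the convexity dictionary.

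The equality claim is where you are right to balk: as stated in the lemma (``$=$'' only when $a=b$) it is false. Take $\omega\ge\tfrac12$ and $a,b\ge 0$: then $L=2\omega$, $\phi_{\omega}(a)=\omega a^{2}$, and the right-hand side of \eqref{eq:upper_bound} equals $\omega b^{2}+2\omega b(a-b)+\omega(a-b)^{2}=\omega a^{2}$, so equality holds for every such pair with $a\neq b$; for $\omega=\tfrac12$ equality holds identically on all of $\mathbb{R}$. Your qualified statement --- strictness requires the segment from $b$ to $a$ to leave the branch of curvature $L$, which in particular rules out $\omega=\tfrac12$ --- is the correct one. Note that the paper's own argument never addresses this point: the first-order convexity condition it invokes yields only the non-strict inequality, and the ``Note'' is asserted without proof. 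The slip is harmless downstream, since the strict inequality in \eqref{eq:mathdef1} is never actually used --- the descent property and the convergence results in Lemma \ref{lem:convergence} and Theorem \ref{thm:iteration} rely only on $Q\ge F_{\omega,\lambda}$ --- but the equality condition as you state it is the one the lemma should carry.
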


Assume the current ``residual" is $r_{i}^{(k)}=y_{i}-\bK_{i}\boldsymbol{\alpha}^{(k)}$,
then it is equivalent in \eqref{eq:obj} that  $y_{i}-\bK_{i}\balpha=r_{i}^{(k)}-\bK_{i}(\balpha-\balpha^{(k)})$.
By lemma \ref{lem:lips}, we obtain
\[
|\phi_{\omega}^{\prime}(r_{i}^{(k)}-\bK_{i}(\balpha-\balpha^{(k)}))-\phi_{\omega}^{\prime}(r_{i}^{(k)})|\leq 2\max(1-\omega,\omega)|\bK_{i}(\balpha-\balpha^{(k)})|,
\]
and the quadratic upper bound
\[
\phi_{\omega}(r_{i}^{(k)}-\bK_{i}(\balpha-\balpha^{(k)}))\leq q_{i}(\balpha\mid\balpha^{(k)}),
\]
where
\[
q_{i}(\balpha\mid\balpha^{(k)})=\phi_{\omega}(r_{i}^{(k)})-\phi_{\omega}^{\prime}(r_{i}^{(k)})\bK_{i}(\balpha-\balpha^{(k)})+\max(1-\omega,\omega)(\balpha-\balpha^{(k)})^{\intercal}\bK_{i}\bK_{i}^{\intercal}(\balpha-\balpha^{(k)}).
\]
Therefore the majorization function of $F_{\omega, \lambda}(\balpha)$ can be written
as
\begin{equation}
Q(\balpha\mid\balpha^{(k)})=\sum_{i=1}^{n}q_{i}(\balpha\mid\balpha^{(k)})+\lambda\balpha^{\intercal}\bK_{0}\balpha,\label{eq:qbound}
\end{equation}
which has an alternatively form that can be written as
\begin{equation}
Q(\balpha\mid\balpha^{(k)})=F_{\omega, \lambda}(\balpha^{(k)})+\nabla F_{\omega, \lambda}(\balpha^{(k)})(\balpha-\balpha^{(k)})+(\balpha-\balpha^{(k)})^{\intercal}\bK_{u}(\balpha-\balpha^{(k)}),\label{eq:alter_qbound}
\end{equation}
where
\begin{align}
\bK_{u} & =\lambda\bK_{0}+\max(1-\omega,\omega)\sum_{i=1}^{n}\bK_{i}\bK_{i}^{\intercal}\label{eq:ku}\\
 & =\max(1-\omega,\omega)\left(\begin{array}{cc}
n & \mathbf{1}^{\intercal}\mathbf{K}\\
\mathbf{K}\mathbf{1} & \mathbf{KK}+\frac{\lambda}{\max(1-\omega,\omega)}\mathbf{K}
\end{array}\right),
\end{align}
and $\mathbf{1}$ is an $n\times1$ vector of all ones.
Our algorithm updates $\balpha$ using the minimizer of the quadratic majorization
function \eqref{eq:alter_qbound}:
\begin{equation}
\balpha^{(k+1)}=\argmin_{\balpha}Q(\balpha\mid\balpha^{(k)})=\balpha^{(k)}+\bK_{u}^{-1}\left(-\lambda\bK_{0}\balpha^{(k)}+\frac{1}{2}\sum_{i=1}^{n}\phi_{\omega}^{\prime}(r_{i}^{(k)})\bK_{i}\right).\label{eq:update}
\end{equation}
The details of the whole procedures for solving \eqref{eq:obj} are described
in Algorithm \ref{alg:main}.
\begin{algorithm}
\caption{The algorithm for the minimization of \eqref{eq:obj}. \label{alg:main}}
\begin{itemize}
\item Let $\{y_{i}\}_{1}^{n}$ be observations of the response, $\{K(\bx_{i},\bx_{j})\}_{i,j=1}^{n}$
be the kernel of all observations, and $\balpha:=(\alpha_{0},\alpha_{1},\alpha_{2},\ldots,\alpha_{n})$.
\item Initialize $\balpha^{(0)}$ and $k=0$.
\item Iterate step 1--3 until convergence:\end{itemize}
\begin{enumerate}
\item Calculated the residue of the response by $r_{i}^{(k)}=y_{i}-\mathbf{K}_{i}\balpha^{(k)}$
for all $1\leq i\leq n$.
\item Obtain $\balpha^{(k+1)}$ by:
\[
\balpha^{(k+1)}=\balpha^{(k)}+\bK_{u}^{-1}\left(-\lambda\bK_{0}\balpha^{(k)}+\frac{1}{2}\sum_{i=1}^{n}\phi_{\omega}^{\prime}(r_{i}^{(k)})\bK_{i}\right),
\]
where
\[
\bK_{u}=\max(1-\omega,\omega)\left(\begin{array}{cc}
n & \mathbf{1}^{\intercal}\mathbf{K}\\
\mathbf{K}\mathbf{1} & \mathbf{KK}+\frac{\lambda}{\max(1-\omega,\omega)}\mathbf{K}
\end{array}\right).
\]
\item $k:=k+1$.
\end{enumerate}
\end{algorithm}

\subsection{Convergence analysis}\label{sec:convergence_analysis}

Now we provide the convergence analysis of Algorithm \ref{alg:main}. Lemma \ref{lem:convergence}  below shows that the sequence $(\balpha^{(k)})$ in the algorithm converges to the unique global minimum $\widehat{\balpha}$ of the optimization problem.
\begin{lem}\label{lem:convergence}
If we update $\balpha^{(k+1)}$ by using \textup{\eqref{eq:update}},
then the following results hold:
\begin{enumerate}
\item The descent property of the objective function. $F_{\omega, \lambda}(\balpha^{(k+1)})\leq F_{\omega, \lambda}(\balpha^{(k)})$,
$\forall k$.
\item The convergence of $\balpha$. Assume that $\sum_{i=1}^{n}\bK_{i}\bK_{i}^{\intercal}$
is a positive definite matrix, then $\lim_{k\rightarrow\infty}\|\balpha^{(k+1)}-\balpha^{(k)}\|=0$.
\item The sequence $(\balpha^{(k)})$ converges to $\widehat{\balpha}$,
which is the unique global minimum of \eqref{eq:obj}.
\end{enumerate}
\end{lem}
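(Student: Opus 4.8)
The plan is to handle the three claims in sequence, leaning on the MM structure already in place. For part~1 the descent property is essentially free from the defining properties of the surrogate. Since $\balpha^{(k+1)}$ minimizes $Q(\cdot\mid\balpha^{(k)})$ by \eqref{eq:majobj}, and since $Q$ majorizes $F_{\omega,\lambda}$ with $Q(\balpha^{(k)}\mid\balpha^{(k)})=F_{\omega,\lambda}(\balpha^{(k)})$ by \eqref{eq:mathdef1}--\eqref{eq:mathdef2}, I would simply chain $F_{\omega,\lambda}(\balpha^{(k+1)})\le Q(\balpha^{(k+1)}\mid\balpha^{(k)})\le Q(\balpha^{(k)}\mid\balpha^{(k)})=F_{\omega,\lambda}(\balpha^{(k)})$, which is exactly the asserted monotone decrease.

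For part~2 I would make this descent quantitative. Because $Q(\cdot\mid\balpha^{(k)})$ in \eqref{eq:alter_qbound} is a quadratic with Hessian $2\bK_u$ and minimizer $\balpha^{(k+1)}$, completing the square (equivalently, substituting the closed form \eqref{eq:update}) gives $Q(\balpha^{(k+1)}\mid\balpha^{(k)})=F_{\omega,\lambda}(\balpha^{(k)})-(\balpha^{(k+1)}-\balpha^{(k)})^{\intercal}\bK_u(\balpha^{(k+1)}-\balpha^{(k)})$, so together with part~1 I obtain $F_{\omega,\lambda}(\balpha^{(k)})-F_{\omega,\lambda}(\balpha^{(k+1)})\ge (\balpha^{(k+1)}-\balpha^{(k)})^{\intercal}\bK_u(\balpha^{(k+1)}-\balpha^{(k)})$. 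Under the hypothesis that $\sum_i\bK_i\bK_i^{\intercal}$ is positive definite, \eqref{eq:ku} shows $\bK_u$ is positive definite (it adds the nonnegative $\lambda\bK_0$), so its smallest eigenvalue $\mu>0$ yields $F_{\omega,\lambda}(\balpha^{(k)})-F_{\omega,\lambda}(\balpha^{(k+1)})\ge \mu\|\balpha^{(k+1)}-\balpha^{(k)}\|^2$. Since $F_{\omega,\lambda}\ge 0$ is nonincreasing it converges, so telescoping this inequality bounds $\sum_k\|\balpha^{(k+1)}-\balpha^{(k)}\|^2\le F_{\omega,\lambda}(\balpha^{(0)})/\mu<\infty$, which forces $\|\balpha^{(k+1)}-\balpha^{(k)}\|\to 0$.

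For part~3 I would first secure uniqueness and coercivity. Directly from \eqref{eqdef1} I would check that $\phi_\omega$ is strongly convex with modulus $2\min(\omega,1-\omega)$, since $\phi_\omega(t)-\min(\omega,1-\omega)t^2$ is a convex piecewise quadratic; hence under the positive-definiteness hypothesis $F_{\omega,\lambda}$ is strongly convex, so it is coercive and has a unique global minimizer $\widehat{\balpha}$. Coercivity plus the monotone decrease from part~1 confines the iterates to the compact sublevel set $\{F_{\omega,\lambda}\le F_{\omega,\lambda}(\balpha^{(0)})\}$, so the sequence is bounded. I would then pass to the limit in the stationarity condition for $Q$, namely $\nabla F_{\omega,\lambda}(\balpha^{(k)})+2\bK_u(\balpha^{(k+1)}-\balpha^{(k)})=0$: along any convergent subsequence $\balpha^{(k_j)}\to\balpha^*$, part~2 gives $\balpha^{(k_j+1)}-\balpha^{(k_j)}\to 0$, and since Lemma~\ref{lem:lips} makes $\phi_\omega'$, hence $\nabla F_{\omega,\lambda}$, continuous, the limit is $\nabla F_{\omega,\lambda}(\balpha^*)=0$. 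Convexity then forces $\balpha^*=\widehat{\balpha}$, so every subsequential limit of the bounded sequence equals $\widehat{\balpha}$, which upgrades to convergence of the whole sequence.

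The step I expect to be the main obstacle is part~3: the loss $\phi_\omega$ is only $C^1$ and fails to be twice differentiable at the origin, so I cannot simply invoke a Hessian to get strong convexity or a second-order argument. The care lies in (i) establishing strong convexity from the explicit piecewise form rather than from $\phi_\omega''$, and (ii) justifying that mere continuity of $\phi_\omega'$ suffices both to pass the stationarity condition to the limit and to conclude that all accumulation points coincide with the unique minimizer, thereby turning subsequential convergence into convergence of the entire sequence.
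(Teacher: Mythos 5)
Your proposal is correct and follows essentially the same route as the paper: the majorization chain for part 1, the identity $Q(\balpha^{(k+1)}\mid\balpha^{(k)})=F_{\omega,\lambda}(\balpha^{(k)})-(\balpha^{(k+1)}-\balpha^{(k)})^{\intercal}\bK_{u}(\balpha^{(k+1)}-\balpha^{(k)})$ plus positive definiteness of $\bK_u$ for part 2, and boundedness-via-coercivity, passage to the limit in the stationarity condition $\nabla F_{\omega,\lambda}(\balpha^{(k)})+2\bK_u(\balpha^{(k+1)}-\balpha^{(k)})=\mathbf{0}$, and strict convexity for part 3. The only difference is cosmetic: you telescope the descent inequality rather than invoke convergence of the monotone sequence $(F_{\omega,\lambda}(\balpha^{(k)}))$, and you spell out details the paper leaves implicit (the strong-convexity modulus $2\min(\omega,1-\omega)$ of $\phi_\omega$, coercivity, and the upgrade from subsequential to full-sequence convergence).
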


\begin{thm}
\label{thm:iteration} Denote by $\widehat{\balpha}$ the unique minimizer
of \eqref{eq:obj} and
\begin{equation}
\Lambda_{k}=\frac{Q(\widehat{\balpha}\mid\balpha^{(k)})-F_{\omega, \lambda}(\widehat{\balpha})}{(\widehat{\balpha}-\balpha^{(k)})^{\intercal}\bK_{u}(\widehat{\balpha}-\balpha^{(k)})}.\label{eq:LambdaK}
\end{equation}
Note that when $\Lambda_{k}=0$, it is just a trivial case $\balpha^{(j)}=\widehat{\balpha}$
for $j>k$. We define
\[
\Gamma=1-\gamma_{\min}(\bK_{u}^{-1}\bK_{l}),
\]
where
\[
\bK_{l}=\lambda\bK_{0}+\min(1-\omega,\omega)\sum_{i=1}^{n}\bK_{i}\bK_{i}^{\intercal}.
\]
Assume that $\sum_{i=1}^{n}\bK_{i}\bK_{i}^{\intercal}$ is a positive
definite matrix. Then we have the following results:

1. \textup{$F_{\omega, \lambda}(\balpha^{(k+1)})-F_{\omega, \lambda}(\widehat{\balpha})\leq\Lambda_{k}\left(F_{\omega, \lambda}(\balpha^{(k)})-F_{\omega, \lambda}(\widehat{\balpha})\right).$}

2. The sequence $(F_{\omega, \lambda}(\balpha^{(k)}))$ has a linear convergence rate
no greater than $\Gamma$, and $0\leq\Lambda_{k}\leq\Gamma<1$.

3. The sequence $(\balpha^{(k)})$ has a linear convergence rate no
greater than $\sqrt{\Gamma\gamma_{\max}(\mathbf{K}_{u})/\gamma_{\min}(\mathbf{K}_{l})}$, i.e.
\[
\|\balpha^{(k+1)}-\widehat{\balpha}\|\leq \sqrt{\Gamma\frac{\gamma_{\max}(\mathbf{K}_{u})}{\gamma_{\min}(\mathbf{K}_{l})}}\|\balpha^{(k)}-\widehat{\balpha}\|.
\]
\end{thm}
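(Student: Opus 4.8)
The plan is to first record a quadratic \emph{lower} bound for $F_{\omega,\lambda}$ that mirrors the quadratic upper bound of Lemma~\ref{lem:lips}, and then to prove the three claims in order, using the two-sided bounds to control $\Lambda_k$ and to pass from function-value convergence to iterate convergence. For the preliminary, the same piecewise-quadratic structure exploited in Lemma~\ref{lem:lips} gives the matching lower inequality $\phi_{\omega}(a)\ge\phi_{\omega}(b)+\phi^{\prime}_{\omega}(b)(a-b)+\min(1-\omega,\omega)(a-b)^{2}$, because $\phi_{\omega}(t)-\min(1-\omega,\omega)t^{2}$ is convex and $C^{1}$. Summing the $n$ terms and adding the quadratic penalty (exactly as $Q$ was built) yields
\[
F_{\omega,\lambda}(\balpha)\ge F_{\omega,\lambda}(\balpha^{(k)})+\nabla F_{\omega,\lambda}(\balpha^{(k)})(\balpha-\balpha^{(k)})+(\balpha-\balpha^{(k)})^{\intercal}\bK_{l}(\balpha-\balpha^{(k)}).
\]
I will also use that $\widehat{\balpha}$ is the unconstrained minimizer of the convex $C^{1}$ function $F_{\omega,\lambda}$, so $\nabla F_{\omega,\lambda}(\widehat{\balpha})=0$, and that $\bK_{l}\succ0$ and $\bK_{u}\succ0$ since $\sum_{i}\bK_{i}\bK_{i}^{\intercal}$ is positive definite.

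For Claim~1, write $\balpha_{t}=\balpha^{(k)}+t(\widehat{\balpha}-\balpha^{(k)})$, and abbreviate $g=\nabla F_{\omega,\lambda}(\balpha^{(k)})$, $\Delta=\widehat{\balpha}-\balpha^{(k)}$, $U_{k}=\Delta^{\intercal}\bK_{u}\Delta$ (the denominator of $\Lambda_{k}$), and $D_{k}=F_{\omega,\lambda}(\balpha^{(k)})-F_{\omega,\lambda}(\widehat{\balpha})\ge0$. Restricting the quadratic $Q(\,\cdot\mid\balpha^{(k)})$ to the segment gives $Q(\balpha_{t}\mid\balpha^{(k)})=F_{\omega,\lambda}(\balpha^{(k)})+t\,g\Delta+t^{2}U_{k}$. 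Since $\balpha^{(k+1)}$ is the \emph{global} minimizer of $Q$ and $F_{\omega,\lambda}\le Q$, we have $F_{\omega,\lambda}(\balpha^{(k+1)})\le Q(\balpha^{(k+1)}\mid\balpha^{(k)})\le Q(\balpha_{t}\mid\balpha^{(k)})$ for \emph{every} $t$. The crucial step is to choose $t=1-\Lambda_{k}$: the defining identity $Q(\widehat{\balpha}\mid\balpha^{(k)})-F_{\omega,\lambda}(\widehat{\balpha})=\Lambda_{k}U_{k}$ rearranges to $g\Delta=(\Lambda_{k}-1)U_{k}-D_{k}$, and substituting this into $D_{k}+t\,g\Delta+t^{2}U_{k}$ at $t=1-\Lambda_{k}$ collapses (the $(1-\Lambda_{k})^{2}U_{k}$ terms cancel) to exactly $\Lambda_{k}D_{k}$, giving Claim~1.

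For Claim~2, $\Lambda_{k}\ge0$ is immediate from majorization (the numerator is nonnegative) and $U_{k}>0$. For the upper bound, evaluating the strong-convexity lower bound at $\widehat{\balpha}$ gives $F_{\omega,\lambda}(\widehat{\balpha})\ge F_{\omega,\lambda}(\balpha^{(k)})+g\Delta+\Delta^{\intercal}\bK_{l}\Delta$; subtracting this from $Q(\widehat{\balpha}\mid\balpha^{(k)})=F_{\omega,\lambda}(\balpha^{(k)})+g\Delta+\Delta^{\intercal}\bK_{u}\Delta$ bounds the numerator by $\Delta^{\intercal}(\bK_{u}-\bK_{l})\Delta$, so $\Lambda_{k}\le1-\Delta^{\intercal}\bK_{l}\Delta/\Delta^{\intercal}\bK_{u}\Delta$. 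Minimizing this generalized Rayleigh quotient (set $\Delta=\bK_{u}^{-1/2}z$, so it equals the smallest eigenvalue of $\bK_{u}^{-1/2}\bK_{l}\bK_{u}^{-1/2}$, which is $\gamma_{\min}(\bK_{u}^{-1}\bK_{l})$) yields $\Lambda_{k}\le\Gamma$; positive-definiteness of $\bK_{l}$ gives $\Gamma<1$, and $\bK_{l}\preceq\bK_{u}$ gives $\Gamma\ge0$. Combining with Claim~1 delivers the linear rate $\le\Gamma$ for $(F_{\omega,\lambda}(\balpha^{(k)}))$. For Claim~3, expand $F_{\omega,\lambda}$ about $\widehat{\balpha}$ using $\nabla F_{\omega,\lambda}(\widehat{\balpha})=0$: the lower bound gives $F_{\omega,\lambda}(\balpha^{(k+1)})-F_{\omega,\lambda}(\widehat{\balpha})\ge\gamma_{\min}(\bK_{l})\|\balpha^{(k+1)}-\widehat{\balpha}\|^{2}$ and the majorization (upper) bound gives $F_{\omega,\lambda}(\balpha^{(k)})-F_{\omega,\lambda}(\widehat{\balpha})\le\gamma_{\max}(\bK_{u})\|\balpha^{(k)}-\widehat{\balpha}\|^{2}$; chaining these with the Claim~2 contraction and taking square roots gives the stated iterate bound.

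The main obstacle is Claim~1. The naive chain $F_{\omega,\lambda}(\balpha^{(k+1)})-F_{\omega,\lambda}(\widehat{\balpha})\le Q(\widehat{\balpha}\mid\balpha^{(k)})-F_{\omega,\lambda}(\widehat{\balpha})=\Lambda_{k}U_{k}$ is too weak, because it produces $\Lambda_{k}U_{k}$ rather than $\Lambda_{k}D_{k}$, and $U_{k}$ can far exceed the suboptimality $D_{k}$ (a one-observation example with $\omega$ near $1$ already shows $U_{k}\gg D_{k}$). The right idea is to test the \emph{global} minimality of $\balpha^{(k+1)}$ against the whole one-parameter family $\balpha_{t}$ and to select the particular $t=1-\Lambda_{k}$ at which the definition of $\Lambda_{k}$ forces the restricted quadratic to equal exactly $\Lambda_{k}D_{k}$; once this line-search argument is in place, the remaining claims are a fairly mechanical application of the two-sided quadratic bounds around $\balpha^{(k)}$ and $\widehat{\balpha}$.
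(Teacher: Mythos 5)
Your proposal is correct and follows essentially the same route as the paper: your evaluation of $Q$ at $\balpha_{1-\Lambda_k}$ is exactly the paper's evaluation at the convex combination $\Lambda_k\balpha^{(k)}+(1-\Lambda_k)\widehat{\balpha}$, your Rayleigh-quotient bound $\Lambda_k\le 1-\gamma_{\min}(\bK_u^{-1}\bK_l)$ via the $\bK_l$ lower bound matches the paper's Claim~2 argument, and the two-sided quadratic bounds at $\widehat{\balpha}$ with $\nabla F_{\omega,\lambda}(\widehat{\balpha})=0$ give Claim~3 just as in the paper. The only (welcome) addition is that you explicitly derive the $\bK_l$ lower bound from convexity of $\phi_{\omega}(t)-\min(1-\omega,\omega)t^2$, which the paper asserts without proof.
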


Theorem~\ref{thm:iteration} says that the convergence rate of Algorithm \ref{alg:main} is at least linear. In our numeric experiments, we have found that Algorithm \ref{alg:main}
converges very fast: the convergence criterion is usually met after 15 iterations.

\subsection{Implementation}\label{sec:implementation}
We discuss some techniques used in our implementation to  further improve the computational speed of the algorithm.

Usually expectile models are computed by applying Algorithm \ref{alg:main}
on a descending sequence of $\lambda$ values. To create a
sequence $\{\lambda_{m}\}_{m=1}^{M}$, we place $M-2$ points uniformly (in the log-scale) between
the starting and ending point $\lambda_{\max}$ and $\lambda_{\min}$ such
that the $\lambda$ sequence length is $M$. The default number
for $M$ is 100, hence $\lambda_{1}=\lambda_{\max}$, and $\lambda_{100}=\lambda_{\min}$. We adopt the warm-start trick to implement the solution
paths along $\lambda$ values: suppose that we have already obtained
the solution $\widehat{\boldsymbol{\alpha}}_{\lambda_{m}}$ at $\lambda_{m}$,
then $\widehat{\boldsymbol{\alpha}}_{\lambda_{m}}$ will be used as
the initial value for computing the solution at $\lambda_{m+1}$ in
Algorithm \ref{alg:main}.

Another computational trick adopted is based on the fact that in Algorithm \ref{alg:main},  the inverse of $\mathbf{K}_{u}$ does not have to be re-calculated for each $\lambda$. There is an easy way to update $\mathbf{K}^{-1}_{u}$ for $\lambda_{1},\lambda_{2},\ldots$. Because $\mathbf{K}_{u}$ can be partitioned into two rows and two
columns of submatrices, by Theorem 8.5.11 of \citet{harville2008matrix},
$\mathbf{K}^{-1}_{u}$ can be expressed as

\begin{align}
\bK_{u}^{-1}(\lambda) & =\frac{1}{\max(1-\omega,\omega)}\left(\begin{array}{cc}
n & \mathbf{1}^{\intercal}\mathbf{K}\\
\mathbf{K}\mathbf{1} & \mathbf{KK}+\frac{\lambda}{\max(1-\omega,\omega)}\mathbf{K}
\end{array}\right)^{-1}\nonumber \\
 & =\frac{1}{\max(1-\omega,\omega)}\left[\left(\begin{array}{cc}
\frac{1}{n} & \mathbf{0}_{1\times n}\\
\mathbf{0}_{n\times1} & \mathbf{0}_{n\times n}
\end{array}\right)+\left(\begin{array}{c}
-\frac{1}{n}\mathbf{1}^{\intercal}\mathbf{K}\\
\mathbf{I}_{n}
\end{array}\right)\mathbf{Q}_{\lambda}^{-1}(-\frac{1}{n}\mathbf{K}\mathbf{1},\mathbf{I}_{n})\right],\label{eq:Ku_partition}
\end{align}
where
\[
\mathbf{Q}_{\lambda}^{-1}=\left[\left(\mathbf{KK}+\frac{\lambda}{\max(1-\omega,\omega)}\mathbf{K}\right)-\frac{1}{n}\mathbf{K}\mathbf{1}\mathbf{1}^{\intercal}\mathbf{K}\right]^{-1}.
\]
In \eqref{eq:Ku_partition} only $\mathbf{Q}_{\lambda}^{-1}$ changes
with $\lambda$, therefore the computation of $\bK_{u}^{-1}$ for a different
$\lambda$ only requires the updating of $\mathbf{Q}_{\lambda}^{-1}$.
Observing that $\mathbf{Q}_{\lambda}^{-1}$ is the inverse of the
sum of two submatrices $\mathbf{A}$ and $\mathbf{B}$:
\[
\mathbf{A}_{\lambda}=\mathbf{KK}+\frac{\lambda}{\max(1-\omega,\omega)}\mathbf{K},\qquad\mathbf{B}=-\frac{1}{n}\mathbf{K}\mathbf{1}\mathbf{1}^{\intercal}\mathbf{K}.
\]
By Sherman\textendash{}Morrison formula \citep{1950},
\begin{equation}
\mathbf{Q}_{\lambda}^{-1}  =\left[\mathbf{A}_{\lambda}+\mathbf{B}\right]^{-1}=\mathbf{A}_{\lambda}^{-1}-\frac{1}{1+g}\mathbf{A}_{\lambda}^{-1}\mathbf{B}\mathbf{A_{\lambda}}^{-1},
\label{eq:qa_relation}
\end{equation}
where $g=\mathrm{trace}(\mathbf{B}\mathbf{A}_{\lambda}^{-1})$, we
find that to get $\mathbf{Q}_{\lambda}^{-1}$ for a different $\lambda$
one just needs to get $\mathbf{A}_{\lambda}^{-1}$, which can be efficiently
computed by using eigen-decomposition $\mathbf{\mathbf{K}=\mathbf{U}\mathbf{D}\mathbf{U}^{\intercal}}$:
\begin{equation}
\mathbf{A}_{\lambda}^{-1}=\left(\mathbf{KK}+\frac{\lambda}{\max(1-\omega,\omega)}\mathbf{K}\right)^{-1}=\mathbf{U}\left(\mathbf{D}^{2}+\frac{\lambda}{\max(1-\omega,\omega)}\mathbf{I}_{n}\right)^{-1}\mathbf{U}^{\intercal}.\label{eq:a_inv_update}
\end{equation}
\eqref{eq:a_inv_update} implies that the computation of $\bK_{u}^{-1}(\lambda)$ depends only on $\lambda$, $\mathbf{D}$, $\mathbf{U}$ and $\omega$. Since $\mathbf{D}$, $\mathbf{U}$ and $\omega$ stay unchanged, we only need to calculate them once. To get $\bK_{u}^{-1}(\lambda)$ for a different $\lambda$ in the sequence, we just need to plug in a new $\lambda$ in \eqref{eq:a_inv_update}.
 
The following is the implementation for computing KERE for a sequence of $\lambda$ values using Algorithm 1:
\begin{itemize}
\item Calculate $\mathbf{U}$ and $\mathbf{D}$ according to $\mathbf{\mathbf{K}=\mathbf{U}\mathbf{D}\mathbf{U}^{\intercal}}$.
\item Initialize $\ensuremath{\widehat{\balpha}_{\lambda_{0}}=[0,0,\ldots,0]}$.
\item \textbf{for} $m=1,2,\ldots,M$, repeat step 1-3:
\begin{enumerate}
\item Initialize $\boldsymbol{\alpha}_{\lambda_{m}}^{(0)}=\widehat{\boldsymbol{\alpha}}_{\lambda_{m-1}}$.
\item Compute $\bK_{u}^{-1}(\lambda_{m})$ using \eqref{eq:Ku_partition}, \eqref{eq:qa_relation} and \eqref{eq:a_inv_update}.
\item Call Algorithm \ref{alg:main} to compute $\widehat{\boldsymbol{\alpha}}_{\lambda_{m}}$.
\end{enumerate}
\end{itemize}

Our algorithm has been implemented in an official R package \texttt{KERE}, which is publicly available from the Comprehensive R Archive Network at \url{http://cran.r-project.org/web/packages/KERE/index.html}.

\section{Simulation}

In this section, we conduct extensive simulations to show the excellent finite performance of KERE.
We investigate how the performance of KERE is affected by various model and error distribution settings,
training sample sizes and other characteristics. Although many kernels are available, throughout
this section we use the commonly recommended \citep{friedman2009elements}
Gaussian radial basis function (RBF) kernel $K(\bx_{i},\mathbf{x}_{j})=e^{\frac{-\|\bx_{i}-\mathbf{x}_{j}\|^{2}}{\sigma^{2}}}$.
We select the best pair of kernel bandwidth $\sigma^{2}$ and regularization
parameter $\lambda$ by two-dimensional five-fold cross-validation. All computations
were done on an Intel Core i7-3770 processor at 3.40GHz.

\subsection*{Simulation I: single covariate case}

The model used for this simulation is defined as
\begin{equation}
y_{i}=\sin(0.7x_{i})+\frac{x_{i}^{2}}{20}+\frac{|x_{i}|+1}{5}\epsilon_{i},\label{eq:sim1}
\end{equation}
which is heteroscedastic as  error  depends on a single covariate
$x\sim U[-8,8]$. We used a single covariate such that the estimator can be visualized nicely.

We used two different error distributions: Laplace
distribution and a mixed normal distribution,
\[
\epsilon_{i}\sim0.5N(0,\frac{1}{4})+0.5N(1,\frac{1}{16}).
\]
We generated $n=400$ training observations from \eqref{eq:sim1},
on which five expectile models with levels $\omega=\{0.05,0.2,0.5,0.8,0.95\}$
were fitted. We selected the best $(\sigma^{2},\lambda)$ pair by
using two-dimensional, five-fold cross-validation. We generated an
additional $n^{\prime}=2000$ test observations for evaluating the
mean absolute deviation (MAD)  of the final estimate. Assume that the
true expectile function is $f_{\omega}$ and the predicted expectile
is $\hat{f}_{\omega}$, then the mean absolute deviation are defined
a
\[
\mathrm{MAD}(\omega)=\frac{1}{n^{\prime}}\sum_{i=1}^{n^{\prime}}|f_{\omega}(\mathbf{x}_{i})-\hat{f}_{\omega}(\mathbf{x}_{i})|.
\]
The true expectile $f_{\omega}$ is equal to $\sin(0.7x)+\frac{x^{2}}{20}+\frac{|x|+1}{5}b_{\omega}(\epsilon)$,
where $b_{\omega}(\epsilon)$ is the $\omega$-expectile of $\epsilon$, which is the theoretical minimizer of $E\phi_{\omega}(\epsilon-b)$.

The simulations were repeated for 100 times under the above settings.
We recorded MADs for different expectile levels in Table \ref{tab:simulation_MAD}.
We find that the accuracy of the expectile prediction with mixed normal
errors is generally better than that with Laplace errors. For the symmetric
Laplace case, the prediction MADs are also symmetric around $\omega=0.5$,
while for the skewed mixed-normal case the MADs are skewed. In order to
show that KERE works as expected, in Figure~\ref{fig:simulation1}
we also compared the theoretical and predicted expectile curves based on KERE with
$\omega=\{0.05,0.2,0.5,0.8,0.95\}$ in Figure \ref{fig:simulation1}.
We can see that the corresponding theoretical and predicted curves
are very close. Theoretically the two should become the same curves as
$n\rightarrow\infty$.

\begin{figure}
\centering{}\includegraphics[scale=0.55]{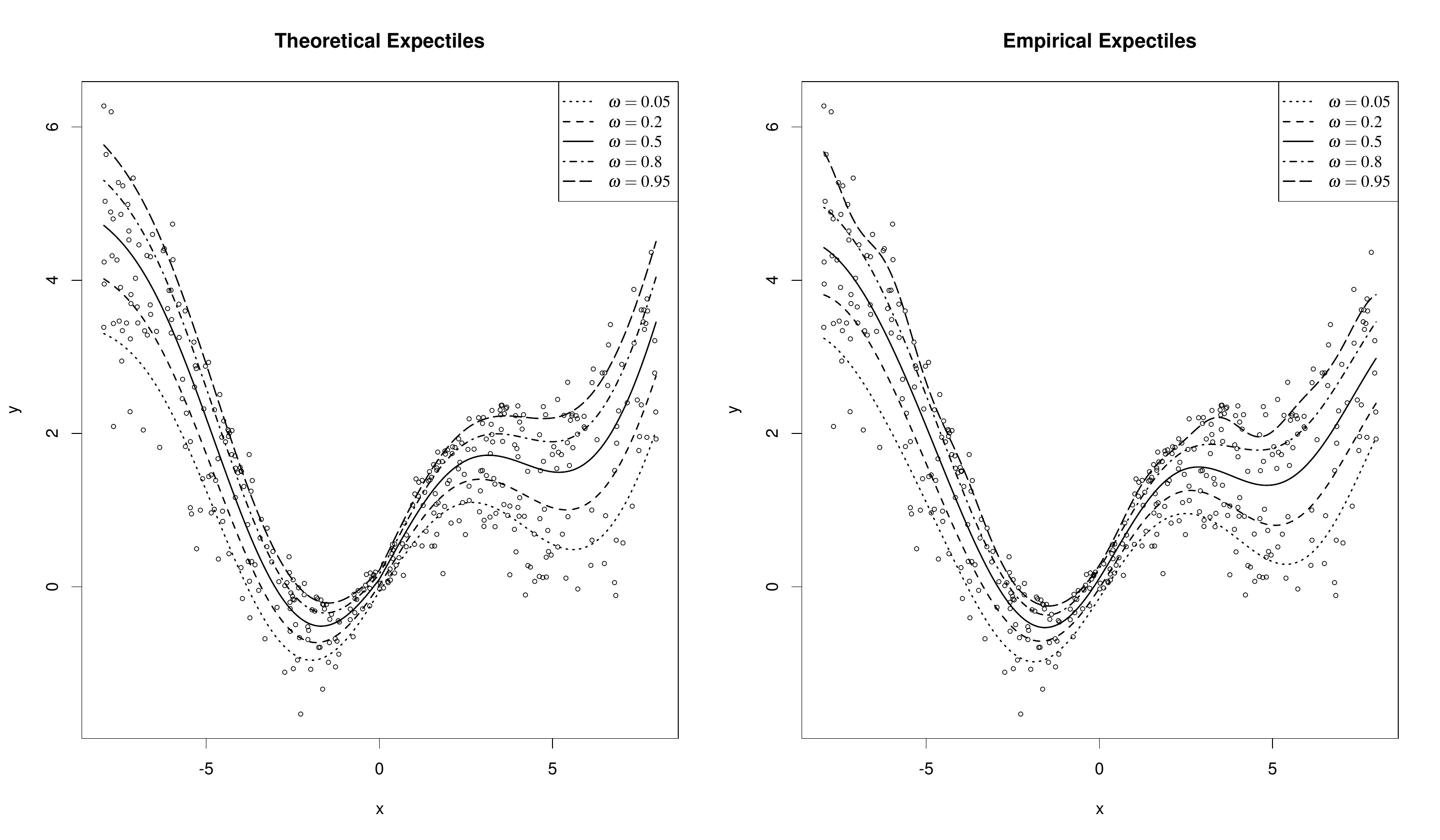} \caption{Theoretical expectiles and empirical expectiles for a covariate heteroscedastic
model with mixed normal error. The model is fitted on five expectile
levels $\omega=\{0.05,0.2,0.5,0.8,0.95\}$.\label{fig:simulation1}}
\end{figure}

\begin{table}
\ra{1.1}

\begin{centering}
\begin{tabular}{cccccc}
\toprule
$\omega$ & 0.05 & 0.2 & 0.5 & 0.8 & 0.95\tabularnewline
\midrule
Mixture & 0.236 (0.003) & 0.138 (0.003) & 0.376 (0.002) & 0.610 (0.002) & 0.788 (0.002)\tabularnewline
Laplace & 2.346 (0.013) & 1.037 (0.007) & 0.179 (0.005) & 1.033 (0.006) & 2.333 (0.027)\tabularnewline
\bottomrule
\end{tabular}
\par\end{centering}

\caption{The averaged MADs and the corresponding standard errors of expectile
regression predictions for single covariate heteroscedastic models
with mixed normal and Laplace error. The models are fitted on five
expectile levels $\omega=\{0.05,0.2,0.5,0.8,0.95\}$. The results
are based on 300 independent runs. \label{tab:simulation_MAD}}
\end{table}

\subsection*{Simulation II: multiple covariate case}

In this part we illustrate that KERE can work very well for target
functions that are non-additive and/or with complex interactions. We
generated data $\{\mathbf{x}_{i},y_{i}\}_{i=1}^{n}$  according to
\[
y_{i}=f_{1}(\mathbf{x}_{i})+|f_{2}(\mathbf{x}_{i})|\epsilon_{i},
\]
where predictors $\mathbf{x}_{i}$ was generated from a joint normal
distribution $N(0,\mathbf{I}_{p})$ with $p=10$. For the error term
$\epsilon_{i}$ we consider three types of distributions:
\begin{enumerate}
\item Normal distribution $\epsilon_{i}\sim N(0,1)$.
\item Student's $t$-distribution with four degrees of freedom $\epsilon_{i}\sim t_{4}$.
\item Mixed normal distribution $\epsilon_{i}\sim0.9N(0,1)+0.1N(1,4)$.
\end{enumerate}

We now describe the construction of $f_1$ and $f_2$.
In the homoscedastic model, we let $f_{2}(\mathbf{x}_{i})=1$ and $f_1$ is generated by the ``random function
generator'' model \citep{Friedman00greedyfunction}, according to
\[
f(\mathbf{x})=\sum_{l=1}^{20}a_{l}g_{l}(\mathbf{x}_{l}),
\]
where $\{a_{l}\}_{l=1}^{20}$ are sampled from uniform distribution
$a_{l}\sim U[-1,1]$, and $\mathbf{x}_{l}$ is a random subset of
$p$-dimensional predictor $\mathbf{x}$, with size $p_{l}=\min(\lfloor1.5+r,p\rfloor)$,
where $r$ was sampled from exponential distribution $r\sim Exp(0.5)$.
The function $g_{l}(\mathbf{x}_{l})$ is an $p_{l}$-dimensional Gaussian
function:
\[
g_{l}(x_{l})=\exp\Big[-\frac{1}{2}(\mathbf{x}_{l}-\boldsymbol{\mu}_{l})^{\intercal}\mathbf{V}_{l}(\mathbf{x}_{l}-\boldsymbol{\mu}_{l})\Big],
\]
where $\boldsymbol{\mu}_{l}$ follows the distribution $N(0,\mathbf{I}_{p_{l}})$.
The $p_{l}\times p_{l}$ covariance matrix $\mathbf{V}_{l}$ is defined
by $\mathbf{V}_{l}=\mathbf{U}_{l}\mathbf{D}_{l}\mathbf{U}_{l}^{\intercal}$,
where $\mathbf{U}_{l}$ is a random orthogonal matrix, and $\mathbf{D}_{l}=\mathrm{diag}(d_{1l},d_{2l},\cdots,d_{p_{l}l})$
with $\sqrt{d_{jl}}\sim U[0.1,2]$.

In the heteroscedastic model,  $f_{1}$ is the same as in the homoscedastic model and $f_{2}$ is independently generated by the ``random function
generator'' model.

We generated $n=300$ observations as the training set, on which the
estimated expectile functions $\hat{f}_{\omega}$ were computed at
seven levels:
\[
\omega\in\{0.05,0.1,0.25,0.5,0.75,0.9,0.95\}.
\]
An additional test set with $n^{\prime}=1200$ observations was generated
for evaluating MADs between the fitted expectile $\hat{f}_{\omega}$ and
the true expectile $f_{\omega}$. Note that the expectile
function $f_{\omega}(\mathbf{x})$ is equal to $f_{1}(\mathbf{x})+b_{\omega}(\epsilon)$
in the homoscedastic model and $f_{1}(\mathbf{x})+|f_{2}(\mathbf{x})|b_{\omega}(\epsilon)$
in the heteroscedastic model, where $b_{\omega}(\epsilon)$ is the
$\omega$-expectile of the error distribution. Under the above settings,
we repeated the simulations for 300 times and record the MAD and timing
each time.

In Figure \ref{fig:sim1} and \ref{fig:sim2} we show the box-plots of
empirical distributions of MADs, and in Table \ref{tab:homo_heter_mad}
we report the average values of $\mathrm{MADs}$ and corresponding
standard errors. We see that KERE can deliver accurate expectile prediction
results in all cases, although relatively the prediction error is
more volatile in the heteroscedastic case as expected: in the mean
regression case ($\omega=0.5$), the averaged MADs in homoscedastic
and heteroscedastic models are very close. But this difference  grows
larger as $\omega$ moves away from $0.5$. We also observe that the
prediction MADs for symmetric distributions, normal and $t_{4}$,
also appear to be symmetric around the conditional mean $\omega=0.5$,
and that the prediction MADs in the skewed mixed-normal distribution
cases are asymmetric. The total computation times for conducting two-dimensional,
five-fold cross-validation and fitting the final model with the chosen
parameters $(\sigma^{2},\lambda)$ for conditional expectiles are
also reported in Table \ref{tab:homo_heter_time}. We find that the
algorithm can efficiently solve all models under 20 seconds, regardless
of choices of error distributions.

\begin{figure}
\begin{centering}
\includegraphics[scale=0.65]{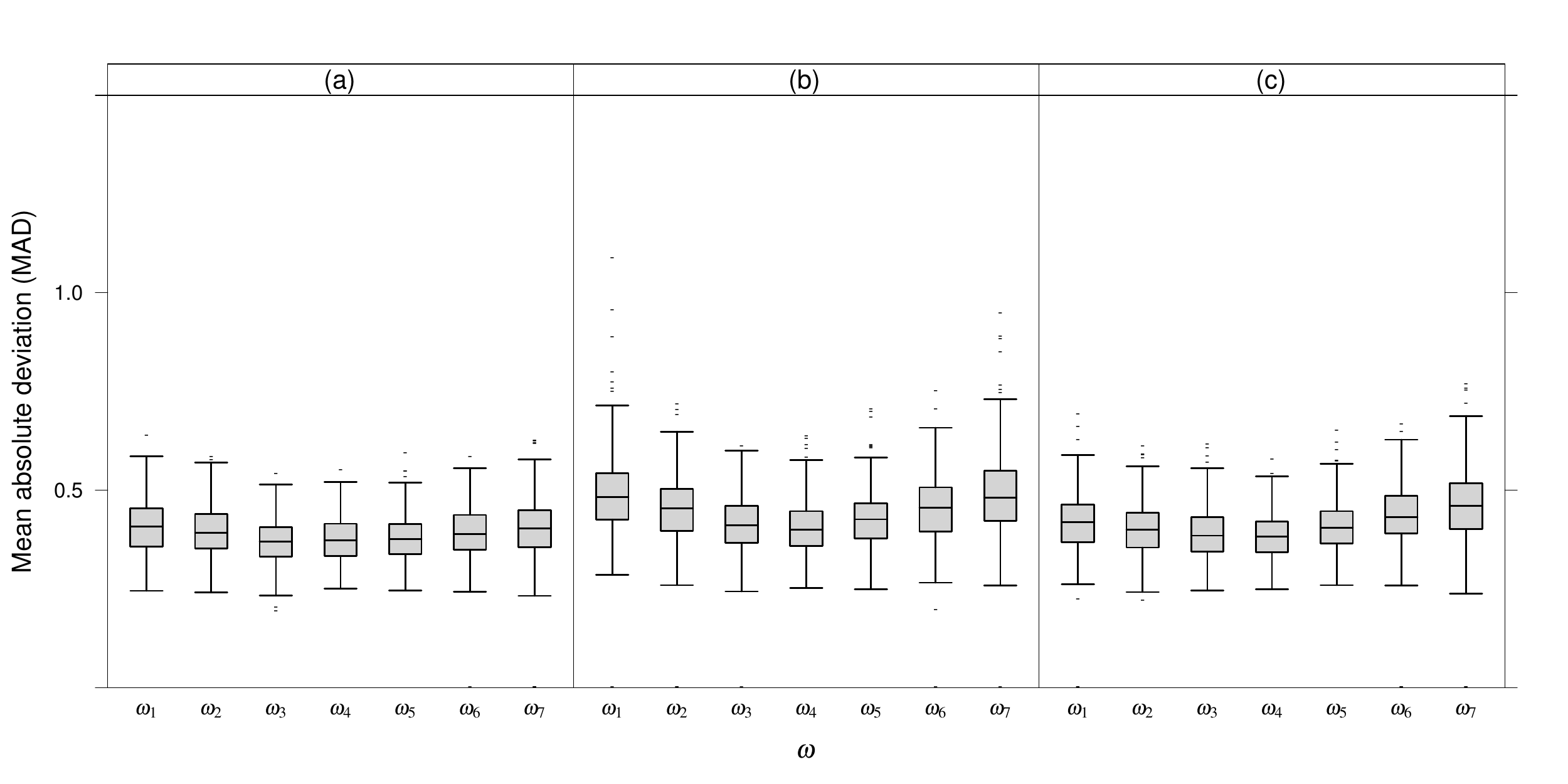}
\par\end{centering}

\caption{Homoscedastic models with error distribution (a) normal, (b) $t_{4}$
distribution, (c) mixed normal. Box-plots show MADs based on 300 independent
runs for expectiles $\omega\in\{0.05,0.1,0.25,0.5,0.75,0.9,0.95\}$.
\label{fig:sim1}}
\end{figure}

\begin{figure}
\begin{centering}
\includegraphics[scale=0.65]{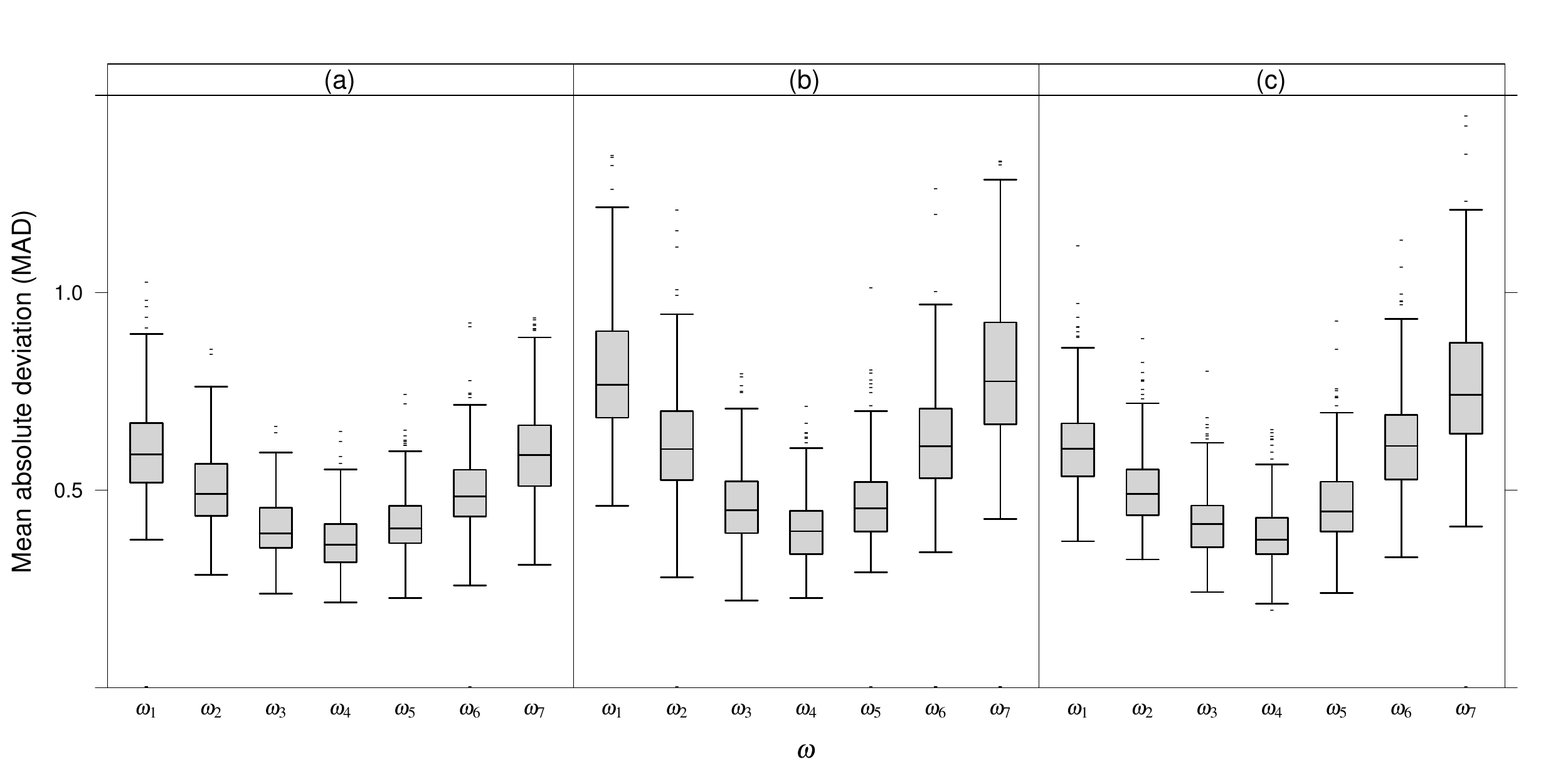}
\par\end{centering}

\caption{Heteroscedastic models with error distribution (a) normal, (b) $t_{4}$
distribution, (c) mixed normal. Box-plots show MADs based on 300 independent
runs for expectiles $\omega\in\{0.05,0.1,0.25,0.5,0.75,0.9,0.95\}$.
\label{fig:sim2}}
\end{figure}

\begin{table}
\ra{1.1}

\begin{centering}
\begin{tabular}{lrrrrrrr}
\toprule
 & \multicolumn{3}{c}{Homoscedastic model} & \phantom{}  & \multicolumn{3}{c}{Heteroscedastic model}\tabularnewline
\cmidrule{2-4} \cmidrule{6-8}
$\omega$ & Normal & $t_{4}$ & Mixture  &  & Normal & $t_{4}$ & Mixture\tabularnewline
\midrule
0.05 & 0.4068 & 0.4916 & 0.4183 &  & 0.6009 & 0.8035 & 0.6142\tabularnewline
 & (0.0039) & (0.0061) & (0.0046) &  & (0.0079) & (0.0103) & (0.0066)\tabularnewline
0.1 & 0.3975 & 0.4529 & 0.4019 &  & 0.5067 & 0.6315 & 0.5052\tabularnewline
 & (0.0037) & (0.0051) & (0.0037) &  & (0.0056) & (0.0094) & (0.0054)\tabularnewline
0.25 & 0.3717 & 0.4145 & 0.3886 &  & 0.4065 & 0.4648 & 0.4173\tabularnewline
 & (0.0031) & (0.0042) & (0.0038) &  & (0.0042) & (0.0061) & (0.0047)\tabularnewline
0.5 & 0.3750 & 0.4069 & 0.3851 &  & 0.3712 & 0.4038 & 0.3886\tabularnewline
 & (0.0032) & (0.0038) & (0.0032) &  & (0.0042) & (0.0049) & (0.0045)\tabularnewline
0.75 & 0.3782 & 0.4261 & 0.4102 &  & 0.4185 & 0.4702 & 0.4635\tabularnewline
 & (0.0033) & (0.0042) & (0.0036) &  & (0.0046) & (0.0064) & (0.0057)\tabularnewline
0.9 & 0.3932 & 0.4553 & 0.4356 &  & 0.4968 & 0.6226 & 0.6203\tabularnewline
 & (0.0038) & (0.0050) & (0.0045) &  & (0.0058) & (0.0081) & (0.0076)\tabularnewline
0.95 & 0.4040 & 0.4925 & 0.4628 &  & 0.5938 & 0.8078 & 0.7631\tabularnewline
 & (0.0046) & (0.0062) & (0.0054) &  & (0.0066) & (0.0128) & (0.0102)\tabularnewline
\bottomrule
\end{tabular}
\par\end{centering}

\caption{The averaged MADs and the corresponding standard errors for fitting
homoscedastic and heteroscedastic models based on 300 independent
runs. The expectile levels are $\omega\in\{0.05,0.1,0.25,0.5,0.75,0.9,0.95\}$.\label{tab:homo_heter_mad}}
\end{table}

\begin{table}
\ra{1.1}

\begin{centering}
\begin{tabular}{lrrrrrrr}
\toprule
 & \multicolumn{3}{c}{Homoscedastic model} & \phantom{}  & \multicolumn{3}{c}{Heteroscedastic model}\tabularnewline
\cmidrule{2-4} \cmidrule{6-8}
$\omega$ & Normal & $t_{4}$ & Mixture  &  & Normal & $t_{4}$ & Mixture\tabularnewline
\midrule
0.05 & 19.04 & 21.47 & 17.10 &  & 16.90 & 17.60 & 17.95\tabularnewline
0.1 & 14.25 & 16.89 & 13.91 &  & 14.38 & 14.60 & 15.21\tabularnewline
0.25 & 11.67 & 15.25 & 13.59 &  & 12.30 & 12.49 & 12.36\tabularnewline
0.5 & 10.54 & 14.09 & 12.18 &  & 10.92 & 11.13 & 11.01\tabularnewline
0.75 & 8.24 & 15.33 & 10.47 &  & 12.48 & 12.48 & 12.38\tabularnewline
0.9 & 10.08 & 14.39 & 12.46 &  & 14.67 & 15.25 & 14.52\tabularnewline
0.95 & 12.16 & 19.90 & 15.17 &  & 17.34 & 17.75 & 16.61\tabularnewline
\bottomrule
\end{tabular}
\par\end{centering}

\caption{The averaged computation times (in seconds) for fitting homoscedastic
and heteroscedastic models based on 300 independent runs. The expectile
levels are  $\omega\in\{0.05,0.1,0.25,0.5,0.75,0.9,0.95\}$.\label{tab:homo_heter_time}}
\end{table}

We next study how sample size affects predictive performance and computational
time. We fit expectile models with $\omega\in\{$0.1, 0.5, $0.9\}$
using various sizes of training sets $n\in\{$250, 500, 750, $1000\}$
and evaluate the prediction accuracy of the estimate using an independent
test set of size $n^{\prime}=2000$. We then report the averaged MADs
and the corresponding averaged timings in Table \ref{tab:sample_size_table}.
Since the results are very close for different model settings, only
the result from the heteroscedastic model with mixed-normal error
is presented. We find that the sample size strongly affects predictive
performance and timings: large samples give models with higher predictive
accuracy at the expense of computational cost -- the timings as least
quadruple as one doubles sample size.

\begin{table}
\ra{1.1}

\begin{centering}
\begin{tabular}{ccccccccccc}
\toprule
 &  & \multicolumn{4}{c}{Error} &  & \multicolumn{4}{c}{Timing}\tabularnewline
\cmidrule{3-6} \cmidrule{8-11}
$n$ &  & 250 & 500 & 750 & 1000 &  & 250 & 500 & 750 & 1000\tabularnewline
\midrule
$\omega=0.1$ &  & 0.4824 & 0.4084 & 0.4047 & 0.3887 &  & 8.739 & 56.188 & 168.636 & 382.897\tabularnewline
$\omega=0.5$ &  & 0.3329 & 0.2977 & 0.2732 & 0.2544 &  & 6.028 & 43.802 & 159.398 & 329.646\tabularnewline
$\omega=0.9$ &  & 0.6341 & 0.5861 & 0.5563 & 0.5059 &  & 9.167 & 56.533 & 173.359 & 386.345\tabularnewline
\bottomrule
\end{tabular}
\par\end{centering}

\caption{The averaged MADs and the corresponding averaged computation times
(in seconds) are reported. The size of the training set varies from
250 to 1000. The size of the test data set is 2000. All models are
fitted on three expectile levels: (a) $\omega=0.1$, (b) $\omega=0.5$
and (c) $\omega=0.9$. \label{tab:sample_size_table}}
\end{table}

\section{Real data application}

In this section we illustrate KERE by applying it to the Personal
Computer Price Data studied in \citet{stengos2006intertemporal}.
The data collected from the PC Magazine from January of 1993 to November
of 1995 has 6259 observations, each of which consists of the advertised
price and features of personal computers sold in United States. There
are 9 main price detriments of PCs summarized in Table \ref{tab:housing_data}.
The price and the continuous variables except the time trend are in
logarithmic scale. We consider a hedonic analysis, where the price
of a product is considered to be a function of the implicit prices
of its various components, see \citet{triplett1989price}. The intertemporal effect of the implicit PC-component prices is captured by
incorporating the time trend as one of the explanatory variables.
The presence of non-linearity and the interactions of the  components
with the time trend in the data, shown by \citet{stengos2006intertemporal},
suggest that the linear expectile regression
may lead to a misspecified model. Since there lacks of a general theory
about any particular functional form for the PC prices, we use KERE to capture the nonlinear effects and higher order interactions of characteristics on price
and avoid severe model misspecification.

We randomly sampled $1/10$ observations for training and tuning with
two-dimensional five-fold cross-validation for selecting an optimal
$(\sigma^{2},\lambda)$ pair, and the remaining  $9/10$ observations as
the test set for calculating the prediction error defined by
\[
\mathrm{prediction\ error}=\frac{1}{n^{\prime}}\sum_{i=1}^{n^{\prime}}\phi_{\omega}(y_{i}-\hat{f}_{\omega}(\mathbf{x}_{i})).
\]
For comparison, we also computed the prediction errors using the linear
expectile regression models under the same
setting. All prediction errors are computed for seven expectile levels
$\omega\in\{0.05,$ 0.1, 0.25, 0.5, 0.75, 0.9, $0.95\}$. We repeated
this process 100 times and reported the average prediction error and
their corresponding standard errors in Table \ref{tab:real}. We also
showed box-plots of empirical distributions of prediction errors in
Figure \ref{fig:real}. We see that for all expectile levels KERE
outperforms the linear expectile model in terms of both prediction
error and the corresponding standard errors. This shows that KERE
offers much more flexible and accurate predictions than the linear
model by guarding against model misspecification bias.

\begin{table}
\ra{1.1}

\begin{centering}
\begin{tabular}{lll}
\toprule
{\small{}{ID } } & {\small{}{Variable } } & {\small{}{Explanation}}\tabularnewline
\midrule
{\small{}{1 } } & SPEED & clock speed in MHz\tabularnewline
{\small{}{2 } } & HD & size of hard drive in MB\tabularnewline
{\small{}{3 } } & RAM & size of RAM in in MB\tabularnewline
{\small{}{4 } } & SCREEN & size of screen in inches\tabularnewline
{\small{}{5 } } & CD & if a CD-ROM present\tabularnewline
{\small{}{6 } } & PREMIUM & if the manufacturer was a ``premium'' firm (IBM, COMPAQ)\tabularnewline
{\small{}{7 } } & MULTI & if a multimedia kit (speakers, sound card) included\tabularnewline
{\small{}{8 } } & ADS & number of 486 price listings for each month\tabularnewline
{\small{}{9 } } & TREND & time trend indicating month starting from Jan. 1993 to Nov. 1995\tabularnewline
\bottomrule
\end{tabular}
\par\end{centering}

\caption{Explanatory variables in the Personal Computer Price Data \citep{stengos2006intertemporal} \label{tab:housing_data}}

\end{table}

\begin{table}
\ra{1.1}

\begin{centering}
\begin{tabular}{rrrrrrrr}
\toprule
\multicolumn{8}{c}{Personal Computer Price Data }\tabularnewline
\midrule
$\omega$ & 0.05 & 0.1 & 0.25 & 0.5 & 0.75 & 0.9 & 0.95\tabularnewline
\midrule
Linear & 5.727 & 3.396 & 5.722 & 7.078 & 6.032 & 3.814 & 2.517\tabularnewline
 & (0.013) & (0.010) & (0.015) & (0.017) & (0.015) & (0.014) & (0.012)\tabularnewline
KERE & 3.970 & 2.523 & 3.952 & 4.749 & 4.094 & 2.684 & 1.868\tabularnewline
 & (0.013) & (0.010) & (0.015) & (0.017) & (0.015) & (0.014) & (0.012)\tabularnewline
\bottomrule
\end{tabular}
\par\end{centering}

\caption{The averaged prediction error and the corresponding standard errors
for the Personal Computer Price Data based on 100 independent runs.
The expectile levels are $\omega\in\{0.05,0.1,0.25,0.5,0.75,0.9,0.95\}$. The numbers in this table are of the order of $10^{-3}.$\label{tab:real}}
\end{table}

\begin{figure}
\begin{centering}
\includegraphics[scale=0.65]{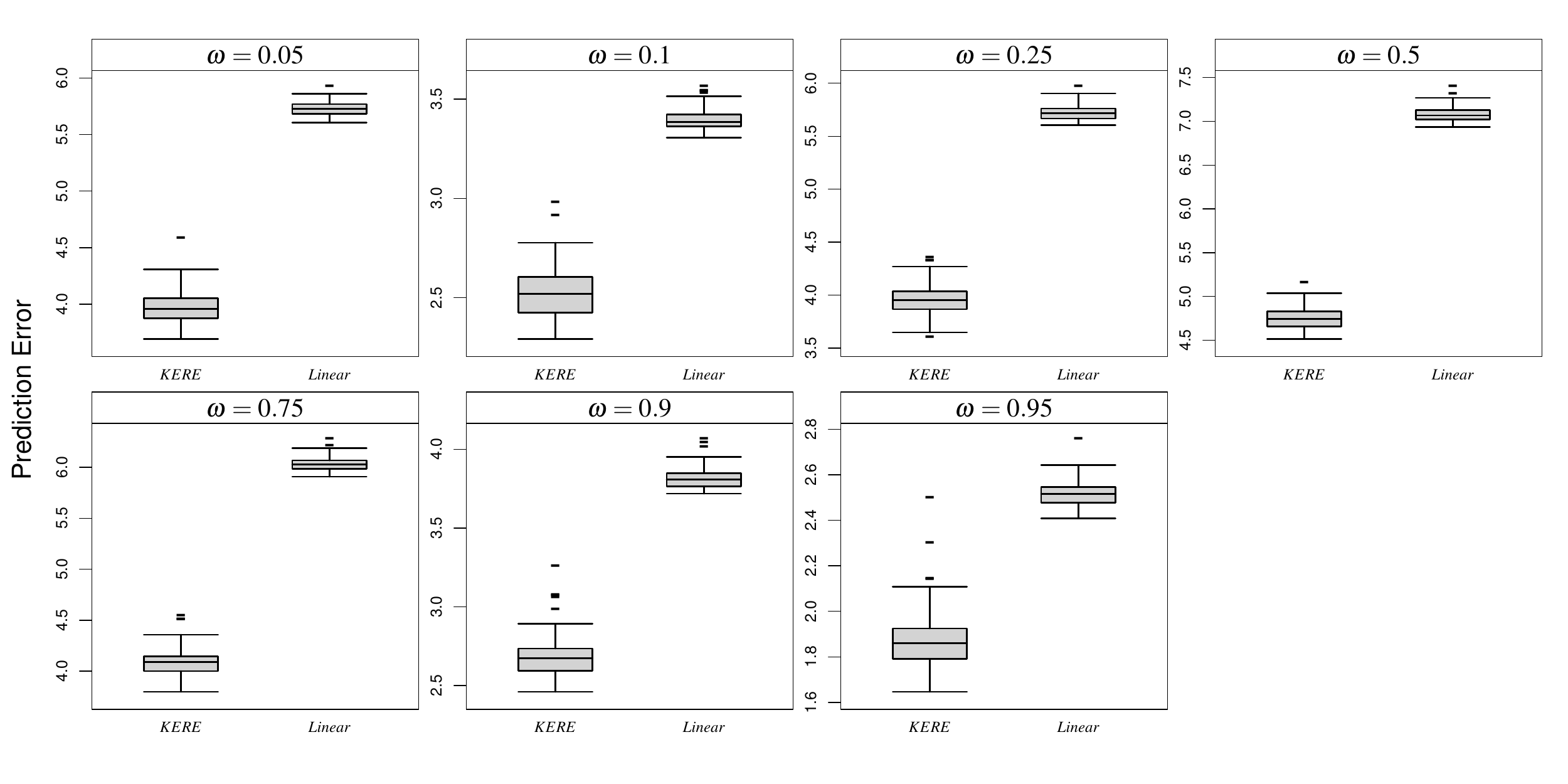}
\par\end{centering}

\caption{Prediction error distributions for the Personal Computer Price Data
using the linear expectile model and KERE. Box-plots show prediction error
based on 100 independent runs for expectiles $\omega\in\{0.05,0.1,0.25,0.5,0.75,0.9,0.95\}$. The numbers in this table are of the order of $10^{-3}.$
\label{fig:real}}
\end{figure}

\section*{Appendix: Technical Proofs}

\subsection{Some technical lemmas for Theorem~\ref{thm:asymptotic}}

We first present some technical lemmas and their proofs. These lemmas are used to prove Theorem~\ref{thm:asymptotic}.

\begin{lem}\label{lemma:setup3}
Let $\phi_{\omega}^*$ be the convex conjugate of $\phi_{\omega}$,
\[
\phi_{\omega}^*(t)=\begin{cases}
\frac{1}{4(1-\omega)}t^2 &\text{if $t \leq 0$,}\\
\frac{1}{4\omega}t^2&\text{if $t > 0$.}
\end{cases}
\]
The solution to \eqref{eq:setup2} can be alternatively obtained by solving the optimization problem
\begin{equation}
\min_{{\{\alpha_{i}\}_{i=0}^{n}}} g(\alpha_1,\alpha_2,\ldots, \alpha_n), \quad \text{subject to} \quad \sum_{i=1}^n\alpha_i=0,\label{eq:setup3}
\end{equation}
where $g$ is defined by
\begin{equation}
g(\alpha_1,\alpha_2,\ldots, \alpha_n)=-\sum_{i=1}^n y_i \alpha_i  +\frac{1}{2}\sum_{i,j=1}^n\alpha_{i}\alpha_{j}K(\bx_{i},\bx_{j}) +  2 \lambda \sum_{i=1}^n\phi_{\omega}^*(\alpha_i).\label{eq:gdef}
\end{equation}

\end{lem}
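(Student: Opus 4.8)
The plan is to obtain \eqref{eq:setup3} as the Fenchel (Lagrangian) dual of the KERE problem \eqref{eq:setup1} and to identify the resulting dual variables with the representer coefficients of \eqref{eq:sol}. First I would introduce residual slack variables $u_i = y_i - \alpha_0 - f(\bx_i)$ and rewrite \eqref{eq:setup1} as the minimization of $\sum_{i=1}^n \phi_{\omega}(u_i) + \lambda\|f\|_{\mathbb{H}_K}^2$ subject to the affine constraints $u_i + \alpha_0 + f(\bx_i) = y_i$. Attaching multipliers $\beta_i$ gives the Lagrangian
\[
L(f,\alpha_0,u,\beta)=\sum_{i=1}^n \phi_{\omega}(u_i)+\lambda\|f\|_{\mathbb{H}_K}^2+\sum_{i=1}^n \beta_i\bigl(y_i-\alpha_0-f(\bx_i)-u_i\bigr),
\]
and the dual function is its infimum over $(f,\alpha_0,u)$.

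The three partial minimizations decouple cleanly. Minimizing over each $u_i$ gives $\inf_{u_i}\{\phi_{\omega}(u_i)-\beta_i u_i\}=-\phi_{\omega}^*(\beta_i)$ directly from the definition of the convex conjugate, producing the term $-\sum_i\phi_{\omega}^*(\beta_i)$. Minimizing over the intercept $\alpha_0$ leaves the linear term $-\alpha_0\sum_i\beta_i$, whose infimum is $-\infty$ unless the equality constraint $\sum_{i=1}^n\beta_i=0$ holds. Minimizing over $f\in\mathbb{H}_K$, I would use the reproducing property $f(\bx_i)=\langle f,K(\bx_i,\cdot)\rangle$ to write the $f$-dependent part as $\lambda\|f\|_{\mathbb{H}_K}^2-\langle f,\sum_i\beta_i K(\bx_i,\cdot)\rangle$, which is minimized at $f^\star=(2\lambda)^{-1}\sum_i\beta_i K(\bx_i,\cdot)$ with value $-\tfrac{1}{4\lambda}\sum_{i,j}\beta_i\beta_j K(\bx_i,\bx_j)$. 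Collecting terms yields the dual problem $\max_{\sum_i\beta_i=0}\bigl\{\sum_i\beta_i y_i-\sum_i\phi_{\omega}^*(\beta_i)-\tfrac{1}{4\lambda}\sum_{i,j}\beta_i\beta_j K(\bx_i,\bx_j)\bigr\}$.

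The final step is a rescaling that makes this dual coincide with $-g$. Since $\phi_{\omega}^*$ is a piecewise quadratic and hence positively homogeneous of degree two, the substitution $\beta_i=2\lambda\alpha_i$ (valid because $\lambda>0$) gives $\phi_{\omega}^*(2\lambda\alpha_i)=(2\lambda)^2\phi_{\omega}^*(\alpha_i)$, and dividing the dual objective by $2\lambda$ produces exactly $\sum_i y_i\alpha_i-\tfrac12\sum_{i,j}\alpha_i\alpha_j K(\bx_i,\bx_j)-2\lambda\sum_i\phi_{\omega}^*(\alpha_i)=-g(\alpha_1,\dots,\alpha_n)$, while $\sum_i\beta_i=0$ becomes $\sum_i\alpha_i=0$. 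Thus maximizing the dual is equivalent to minimizing $g$ under $\sum_i\alpha_i=0$, which is \eqref{eq:setup3}. Crucially, under this scaling the minimizing $f^\star=(2\lambda)^{-1}\sum_i\beta_i K(\bx_i,\cdot)=\sum_i\alpha_i K(\bx_i,\cdot)$ is precisely the representer expansion \eqref{eq:sol}; this is why the same symbols $\alpha_i$ appear in \eqref{eq:setup3}, and it shows that a solution of \eqref{eq:setup3} furnishes the coefficients $\hat\alpha_i$ of \eqref{eq:setup2} (the intercept $\hat\alpha_0$ being recovered afterward from the stationarity condition in $u_i$).

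The main obstacle is justifying strong duality, i.e. the absence of a duality gap, so that solving \eqref{eq:setup3} genuinely recovers the primal optimum. Because \eqref{eq:setup1} is a convex program whose only constraints (after introducing the $u_i$) are affine, Slater's condition holds trivially and strong duality follows; equivalently, invoking the representer theorem first reduces \eqref{eq:setup1} to the finite-dimensional convex program \eqref{eq:setup2}, where zero duality gap is a standard consequence of convex duality. Along the way I would also record the one-line verifications that the stated $\phi_{\omega}^*$ is indeed the conjugate of $\phi_{\omega}$ (an elementary maximization in each branch) and that the infimum over $f$ is attained, which holds by strong convexity of $\lambda\|f\|_{\mathbb{H}_K}^2$.
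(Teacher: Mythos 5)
Your proposal is correct, but it takes a genuinely different route from the paper's. The paper never constructs a dual problem: it takes $g$ as given and verifies that the unconstrained problem \eqref{eq:setup2} and the constrained problem \eqref{eq:setup3} share a common stationary point, by writing the first-order conditions of \eqref{eq:setup2}, namely $-\phi_{\omega}^{\prime}\big(y_i-\alpha_0-\sum_{j}K(\bx_i,\bx_j)\alpha_j\big)+2\lambda\alpha_i=0$ together with $\sum_i\alpha_i=0$ (the latter obtained by combining the $\alpha_0$-equation with the former), and matching them against the Lagrange conditions of \eqref{eq:setup3}, identifying the multiplier $\nu$ with the intercept $\alpha_0$ and using exactly the facts that $\phi_{\omega}^{*\,\prime}$ is the inverse function of $\phi_{\omega}^{\prime}$ and that $2\lambda\phi_{\omega}^{*\,\prime}(\alpha_i)=\phi_{\omega}^{*\,\prime}(2\lambda\alpha_i)$; convexity of both objectives then upgrades the common stationary point to a common minimizer. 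You instead construct \eqref{eq:setup3} from scratch as the Fenchel/Lagrangian dual of \eqref{eq:setup1}: the conjugate arises from minimizing over the slacks $u_i$, the zero-sum constraint arises from the intercept, and the representer expansion $f^{\star}=\sum_i\alpha_i K(\bx_i,\cdot)$ arises from minimizing over $f$. What your approach buys: it explains where $g$ comes from, why the conjugate and the constraint $\sum_i\alpha_i=0$ appear, and it delivers the representer form of the solution as a byproduct rather than an input. What it costs: you must justify strong duality and attainment, which the paper's stationarity matching sidesteps entirely; your appeal to affine constraints is fine in finite dimensions, and your fallback of first invoking the representer theorem to reduce to \eqref{eq:setup2} correctly handles the infinite-dimensional subtlety, so the gap is closed. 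Note finally that your degree-two homogeneity identity $\phi_{\omega}^{*}(2\lambda\alpha)=(2\lambda)^{2}\phi_{\omega}^{*}(\alpha)$ and the paper's identity $2\lambda\phi_{\omega}^{*\,\prime}(\alpha)=\phi_{\omega}^{*\,\prime}(2\lambda\alpha)$ are the same structural fact, stated once at the level of functions and once at the level of derivatives.
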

\begin{proof}
Let $\boldsymbol{\alpha}=(\alpha_1,\alpha_2,\ldots,\alpha_n)^{\intercal}$.  Since both objective functions in \eqref{eq:setup2} and \eqref{eq:setup3} are convex, we only need to show that they share a common stationary point. Define
\[
G_{\omega}(\boldsymbol{\alpha}) = \phi_{\omega}(\alpha_1)+\phi_{\omega}(\alpha_2)+\cdots+\phi_{\omega}(\alpha_n),
\]
\[
\nabla G_{\omega}(\boldsymbol{\alpha}) = (\phi^{\prime}_{\omega}(\alpha_1),\phi^{\prime}_{\omega}(\alpha_2),\ldots,\phi^{\prime}_{\omega}(\alpha_n))^{\intercal}.
\]
By setting the derivatives of \eqref{eq:setup2} with respect to $\boldsymbol{\alpha}$  to be zero, we can find the stationary point of \eqref{eq:setup2} satisfying
\[
\frac{\mathrm{d}}{\mathrm{d\boldsymbol{\alpha}}}\left[\left(\begin{array}{c}
y_{1}-\alpha_{0}\\
y_{2}-\alpha_{0}\\
\vdots\\
y_{n}-\alpha_{0}
\end{array}\right)-\mathbf{K}\boldsymbol{\alpha}\right]\cdot\left[\begin{array}{c}
\phi_{\omega}^{\prime}\big(y_{1}-\alpha_{0}-\sum_{j=1}^{n}K(x_{1},x_{j})\alpha_{j}\big)\\
\phi_{\omega}^{\prime}\big(y_{2}-\alpha_{0}-\sum_{j=1}^{n}K(x_{2},x_{j})\alpha_{j}\big)\\
\vdots\\
\phi_{\omega}^{\prime}\big(y_{n}-\alpha_{0}-\sum_{j=1}^{n}K(x_{n},x_{j})\alpha_{j}\big)
\end{array}\right]+\lambda\frac{\mathrm{d}}{\mathrm{d\boldsymbol{\alpha}}}\boldsymbol{\alpha}^{\intercal}K\boldsymbol{\alpha}=\mathbf{0},
\]
which can be reduced to
\begin{equation}\label{eq:setup2_1}
-\phi_{\omega}^{\prime}\big(y_i-\alpha_0-\sum_{j=1}^n K(x_i,x_j)\alpha_j\big)+2\lambda\alpha_i=0,\quad\text{for $1\leq i \leq n$},
\end{equation}
and setting the derivative of \eqref{eq:setup2} with respect to $\alpha_0$ to be zero, we have
\begin{equation}\label{eq:setup2_2}
\sum_{i=1}^n \phi_{\omega}'\big(y_i-\alpha_0-\sum_{j=1}^n K(x_i,x_j)\alpha_j\big) =0.
\end{equation}
Combining \eqref{eq:setup2_1} and \eqref{eq:setup2_2}, \eqref{eq:setup2_2} can be simplified to
\begin{equation}\label{eq:setup2_3}
\sum_{i=1}^n \alpha_i =0.
\end{equation}
In comparison, the Lagrange function of \eqref{eq:setup3} is
\begin{equation}
g(\alpha_1,\alpha_2,\ldots, \alpha_n)+\nu\sum_{i=1}^n\alpha_i.
\label{eq:lagrangeform}
\end{equation}
The first order conditions of \eqref{eq:lagrangeform} are
\begin{equation}\label{eq:setup3_1}
-y_i+\nu+\sum_{j=1}^n K(x_i,x_j)\alpha_j +2 \lambda \phi_{\omega}^{*\,'}(\alpha_i)=0,\quad \text{for $1\leq i \leq n$},
\end{equation}
and
\begin{equation}\label{eq:setup2_4}
\sum_{i=1}^n \alpha_i =0.
\end{equation}
Noting that $2 \lambda \phi_{\omega}^{*\,'}(\alpha_i)=\phi_{\omega}^{*\,'}(2 \lambda \alpha_i)$ and $\phi_{\omega}^{*\,'}$ is the inverse function of $\phi_{\omega}'$. Let $\nu=\alpha_0$, then  \eqref{eq:setup2_1} and \eqref{eq:setup3_1} are equivalent. Therefore, \eqref{eq:setup2} and \eqref{eq:setup3} have a common stationary point and therefore a common minimizer.
\end{proof}

\begin{lem}\label{lemma:norm_control}
\[
\sum_{j=1}^{n}\alpha_{j}K(\bx_{i},\bx_{j})\leq \sqrt{K(\bx_i,\bx_i)}\cdot \sqrt{\sum_{i=1}^{n}\sum_{j=1}^{n}\alpha_{i}\alpha_{j}K(\bx_{i},\bx_{j})}.
\]
\end{lem}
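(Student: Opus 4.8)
The plan is to recognize this inequality as nothing more than the Cauchy--Schwarz inequality applied inside the RKHS $\mathbb{H}_{K}$, exploiting the reproducing property that has already been invoked in \eqref{eq:sol2}. First I would introduce the function $f(\bx)=\sum_{j=1}^{n}\alpha_{j}K(\bx_{j},\bx)$, which lies in $\mathbb{H}_{K}$ and is exactly of the form \eqref{eq:sol}. The left-hand side of the claimed inequality is then $f(\bx_{i})$, and by the reproducing property $f(\bx_{i})=\langle f,\,K(\bx_{i},\cdot)\rangle_{\mathbb{H}_{K}}$.

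Applying Cauchy--Schwarz in $\mathbb{H}_{K}$ gives
\[
f(\bx_{i})=\langle f,\,K(\bx_{i},\cdot)\rangle_{\mathbb{H}_{K}}\le \|f\|_{\mathbb{H}_{K}}\,\bigl\|K(\bx_{i},\cdot)\bigr\|_{\mathbb{H}_{K}}.
\]
The two norms are then identified with the two square-root factors in the statement: the reproducing property yields $\|K(\bx_{i},\cdot)\|_{\mathbb{H}_{K}}^{2}=\langle K(\bx_{i},\cdot),K(\bx_{i},\cdot)\rangle_{\mathbb{H}_{K}}=K(\bx_{i},\bx_{i})$, while \eqref{eq:sol2} gives $\|f\|_{\mathbb{H}_{K}}^{2}=\sum_{i=1}^{n}\sum_{j=1}^{n}\alpha_{i}\alpha_{j}K(\bx_{i},\bx_{j})$. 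Substituting these two identities into the Cauchy--Schwarz bound reproduces the desired inequality verbatim.

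An equivalent route that avoids any appeal to the Hilbert-space structure is purely matrix-algebraic: write the left-hand side as $\mathbf{e}_{i}^{\intercal}\bK\balpha$ with $\mathbf{e}_{i}$ the $i$-th standard basis vector, and use that the Gram matrix $\bK$ is positive semidefinite so that it admits a symmetric square root $\bK^{1/2}$. Then $\mathbf{e}_{i}^{\intercal}\bK\balpha=\langle\bK^{1/2}\mathbf{e}_{i},\,\bK^{1/2}\balpha\rangle$, and the ordinary Cauchy--Schwarz inequality in $\mathbb{R}^{n}$ bounds this by $\|\bK^{1/2}\mathbf{e}_{i}\|\,\|\bK^{1/2}\balpha\|=\sqrt{\mathbf{e}_{i}^{\intercal}\bK\mathbf{e}_{i}}\,\sqrt{\balpha^{\intercal}\bK\balpha}=\sqrt{K(\bx_{i},\bx_{i})}\,\sqrt{\sum_{i,j}\alpha_{i}\alpha_{j}K(\bx_{i},\bx_{j})}$.

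There is no genuine obstacle here; the only point requiring mild care is that the inequality is stated without an absolute value on the left, which is fine since Cauchy--Schwarz bounds the inner product itself (not merely its magnitude) by the product of norms. I would present the RKHS version as the main argument since it meshes directly with the reproducing-property identity \eqref{eq:sol2} already established, and mention the matrix square-root argument as a self-contained alternative that makes explicit the positive semidefiniteness of $\bK$.
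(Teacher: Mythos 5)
Your proposal is correct and matches the paper's proof: the paper argues exactly as in your matrix-algebraic alternative, writing $\bC=\bK^{1/2}$ and applying Cauchy--Schwarz in $\mathbb{R}^{n}$ to $\balpha^{\intercal}\bC$ and the $i$-th row of $\bC$. Your primary RKHS phrasing via the reproducing property is just the same Cauchy--Schwarz argument stated abstractly, so there is no substantive difference between the two.
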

\begin{proof}
Let $\bC=\bK^{1/2}$, then by Cauchy-Schwarz inequality
\begin{align*}&\sum_{j=1}^{n}\alpha_{j}K(\bx_{i},\bx_{j})
=({\alpha}_1, {\alpha}_2, \ldots, {\alpha}_{n})\bC (\bC_{i,1},\bC_{i,2},\ldots, \bC_{i,n})^T
\\\leq &\|({\alpha}_1, {\alpha}_2, \ldots, {\alpha}_{n})\bC\|\cdot  \|(\bC_{i,1},\bC_{i,2},\ldots, \bC_{i,n})\|
= \sqrt{\sum_{i=1}^{n}\sum_{j=1}^{n}\alpha_{i}\alpha_{j}K(\bx_{i},\bx_{j})}\cdot \sqrt{K(\bx_i,\bx_i)}.
\end{align*}
\end{proof}

\begin{lem} \label{lemma:bound_objective}
For the $g$ function defined in \eqref{eq:gdef}, we have
\begin{align*}
&\frac{1}{2}\sum_{i,j=1}^{n}(\alpha_{i}-\hat{\alpha}_i)(\alpha_{j}-\hat{\alpha}_j)K(\bx_{i},\bx_{j})
+\frac{\lambda}{2\max(1-\omega,\omega)}\sum_{i=1}^{n}(\alpha_{i}-\hat{\alpha}_i)^2
\\\leq &
g(\alpha_1,\alpha_2,\ldots,\alpha_{n})
-
g(\hat{\alpha}_1,\hat{\alpha}_2,\ldots,\hat{\alpha}_{n})
\\\leq & \frac{1}{2}\sum_{i,j=1}^{n}(\alpha_{i}-\hat{\alpha}_i)(\alpha_{j}-\hat{\alpha}_j)K(\bx_{i},\bx_{j})
+\frac{\lambda}{2\min(1-\omega,\omega)}\sum_{i=1}^{n}(\alpha_{i}-\hat{\alpha}_i)^2.
\end{align*}
\end{lem}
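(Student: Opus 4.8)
The plan is to decompose $g$ into a purely quadratic part and a separable piecewise-quadratic part, bound each, and then kill the resulting linear cross-term using the constrained optimality of $(\hat\alpha_1,\dots,\hat\alpha_n)$. Write $\boldsymbol\alpha=(\alpha_1,\dots,\alpha_n)^\intercal$, $\widehat{\boldsymbol\alpha}=(\hat\alpha_1,\dots,\hat\alpha_n)^\intercal$, and split $g(\boldsymbol\alpha)=L(\boldsymbol\alpha)+2\lambda\Phi(\boldsymbol\alpha)$, where $L(\boldsymbol\alpha)=-\sum_i y_i\alpha_i+\tfrac12\boldsymbol\alpha^\intercal\bK\boldsymbol\alpha$ and $\Phi(\boldsymbol\alpha)=\sum_i\phi_\omega^*(\alpha_i)$. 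Because $L$ is exactly quadratic with Hessian $\bK$, Taylor's theorem gives the exact identity
\[
L(\boldsymbol\alpha)-L(\widehat{\boldsymbol\alpha})=\nabla L(\widehat{\boldsymbol\alpha})^\intercal(\boldsymbol\alpha-\widehat{\boldsymbol\alpha})+\tfrac12(\boldsymbol\alpha-\widehat{\boldsymbol\alpha})^\intercal\bK(\boldsymbol\alpha-\widehat{\boldsymbol\alpha}),
\]
so the full quadratic form $\tfrac12\sum_{i,j}(\alpha_i-\hat\alpha_i)(\alpha_j-\hat\alpha_j)K(\bx_i,\bx_j)$ appears with no remainder.

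The heart of the argument is a coordinatewise two-sided bound on the Bregman divergence of $\phi_\omega^*$. Since $\phi_\omega^*$ is convex and $C^1$, with second derivative equal to $\tfrac{1}{2(1-\omega)}$ on $(-\infty,0)$ and $\tfrac{1}{2\omega}$ on $(0,\infty)$ --- both lying in $[\,\tfrac{1}{2\max(1-\omega,\omega)},\tfrac{1}{2\min(1-\omega,\omega)}\,]$ --- integrating twice yields, for all $s,t\in\mathbb{R}$,
\[
\frac{(t-s)^2}{4\max(1-\omega,\omega)}\le \phi_\omega^*(t)-\phi_\omega^*(s)-\phi_\omega^{*\,'}(s)(t-s)\le\frac{(t-s)^2}{4\min(1-\omega,\omega)}.
\]
When $s,t$ lie on the same side of $0$ this is immediate; when the segment straddles the kink, a direct expansion shows the same inequalities hold, the crucial fact being that the cross term $-st$ is then nonnegative. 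Summing over $i$ with $t=\alpha_i$, $s=\hat\alpha_i$ bounds the quantity $\Phi(\boldsymbol\alpha)-\Phi(\widehat{\boldsymbol\alpha})-\nabla\Phi(\widehat{\boldsymbol\alpha})^\intercal(\boldsymbol\alpha-\widehat{\boldsymbol\alpha})$ between $\tfrac{1}{4\max(1-\omega,\omega)}\sum_i(\alpha_i-\hat\alpha_i)^2$ and $\tfrac{1}{4\min(1-\omega,\omega)}\sum_i(\alpha_i-\hat\alpha_i)^2$.

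Adding the two pieces gives
\[
g(\boldsymbol\alpha)-g(\widehat{\boldsymbol\alpha})=\nabla g(\widehat{\boldsymbol\alpha})^\intercal(\boldsymbol\alpha-\widehat{\boldsymbol\alpha})+\tfrac12(\boldsymbol\alpha-\widehat{\boldsymbol\alpha})^\intercal\bK(\boldsymbol\alpha-\widehat{\boldsymbol\alpha})+2\lambda\,D,
\]
where $D$ is the Bregman remainder just bounded and $\nabla g(\widehat{\boldsymbol\alpha})=\nabla L(\widehat{\boldsymbol\alpha})+2\lambda\nabla\Phi(\widehat{\boldsymbol\alpha})$. The linear term vanishes: by the first-order condition \eqref{eq:setup3_1} for the constrained problem \eqref{eq:setup3}, $\nabla g(\widehat{\boldsymbol\alpha})=-\nu\mathbf{1}$ for the Lagrange multiplier $\nu$, and since both $\boldsymbol\alpha$ and $\widehat{\boldsymbol\alpha}$ satisfy $\sum_i\alpha_i=0$ we have $\mathbf{1}^\intercal(\boldsymbol\alpha-\widehat{\boldsymbol\alpha})=0$, hence $\nabla g(\widehat{\boldsymbol\alpha})^\intercal(\boldsymbol\alpha-\widehat{\boldsymbol\alpha})=0$. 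Substituting the Bregman bounds and using $\tfrac{2\lambda}{4}=\tfrac{\lambda}{2}$ produces exactly the two displayed inequalities of the lemma.

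The main obstacle I anticipate is the Bregman estimate across the kink of $\phi_\omega^*$ at the origin: the function is only piecewise twice differentiable, so the naive ``second derivative in $[m,M]$'' reasoning must be verified by hand on the straddling segment, which is where nonnegativity of $-st$ enters. A secondary point not to be overlooked is that the cancellation of the linear term genuinely requires \emph{both} the feasibility of $\boldsymbol\alpha$ and the Lagrange form of $\nabla g(\widehat{\boldsymbol\alpha})$; accordingly the inequalities are to be read for $\boldsymbol\alpha$ satisfying the constraint $\sum_i\alpha_i=0$ of \eqref{eq:setup3}.
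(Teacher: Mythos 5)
Your proof is correct, and its core quantitative content is the same as the paper's: the paper proves the lemma by observing that the Hessian of $g$ is sandwiched between $\bK+\frac{\lambda}{\max(1-\omega,\omega)}\bI$ and $\bK+\frac{\lambda}{\min(1-\omega,\omega)}\bI$ and then applying second-order Taylor bounds around $\widehat{\boldsymbol{\alpha}}$. Where you genuinely differ -- and improve on the paper -- is precisely at the two points you flag. First, the paper disposes of the first-order term by asserting $g'(\widehat{\boldsymbol{\alpha}})=0$, which is not literally true for the minimizer of the constrained problem \eqref{eq:setup3}: by \eqref{eq:setup3_1} only the component of $\nabla g(\widehat{\boldsymbol{\alpha}})$ orthogonal to $\mathbf{1}$ vanishes, i.e. $\nabla g(\widehat{\boldsymbol{\alpha}})=-\nu\mathbf{1}$. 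Your argument, which cancels the linear term using this Lagrange form together with feasibility $\mathbf{1}^{\intercal}(\boldsymbol{\alpha}-\widehat{\boldsymbol{\alpha}})=0$, is the statement that is actually needed: in the proof of Theorem~\ref{thm:asymptotic} the lemma is invoked at comparison points $(\hat{\alpha}_{1}^{[n+1]},\ldots,\hat{\alpha}_{n}^{[n+1]},0)$ and $(\hat{\alpha}_{1}+\tfrac{1}{n}\hat{\alpha}_{n+1},\ldots,\hat{\alpha}_{n}+\tfrac{1}{n}\hat{\alpha}_{n+1},0)$, both of which sum to zero, so your hypothesis is exactly satisfied there. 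Second, you verify the coordinatewise Bregman bound across the kink of $\phi_{\omega}^{*}$ at the origin, where the paper's appeal to ``the second derivative of $g$'' is, strictly speaking, undefined on the hyperplanes $\{\alpha_i=0\}$; your decomposition into an exactly quadratic part plus a separable piecewise-quadratic part reduces this to a one-dimensional computation (alternatively, absolute continuity of the piecewise-linear $\phi_{\omega}^{*\,\prime}$ gives the same bound by integration). The cost is a somewhat longer write-up; what it buys is that the lemma is established under the hypotheses -- feasibility of both points -- under which it is later applied, rather than under an unconstrained stationarity condition that the KERE coefficients do not satisfy.
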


\begin{proof}
It is clear that the second derivative of $g$ is bounded above by $\bK+\frac{\lambda}{\min(1-\omega,\omega)}\bI$ and bounded below by $\bK+\frac{\lambda}{\max(1-\omega,\omega)}\bI$, where $\bK\in\mathbb{R}^{n,n}$. Let $\boldsymbol{\alpha}=(\alpha_1,\alpha_2,\ldots,\alpha_{n})^{\intercal}$
\begin{eqnarray}
g(\boldsymbol{\alpha})-g(\widehat{\boldsymbol{\alpha}})\leq g^{\prime}(\widehat{\boldsymbol{\alpha}})^{\intercal}(\boldsymbol{\alpha}-\widehat{\boldsymbol{\alpha}})+\frac{1}{2}(\bK+\frac{\lambda}{\min(1-\omega,\omega)}\bI)(\boldsymbol{\alpha}-\widehat{\boldsymbol{\alpha}})^{\intercal}(\boldsymbol{\alpha}-\widehat{\boldsymbol{\alpha}}),\\
g(\boldsymbol{\alpha})-g(\widehat{\boldsymbol{\alpha}})\geq g^{\prime}(\widehat{\boldsymbol{\alpha}})^{\intercal}(\boldsymbol{\alpha}-\widehat{\boldsymbol{\alpha}})+\frac{1}{2}(\bK+\frac{\lambda}{\max(1-\omega,\omega)}\bI)(\boldsymbol{\alpha}-\widehat{\boldsymbol{\alpha}})^{\intercal}(\boldsymbol{\alpha}-\widehat{\boldsymbol{\alpha}}).
\end{eqnarray}
Hence when $\boldsymbol{\alpha}$ and $\widehat{\boldsymbol{\alpha}}$ are fixed and $g'(\widehat{\boldsymbol{\alpha}})=0$, the maximum of $g(\boldsymbol{\alpha})-g(\widehat{\boldsymbol{\alpha}})$ is obtained when the second order derivative of $g$ achieves its maximum and the minimum is obtained when the second order derivative achieves its minimum.
\end{proof}

The next lemma establishes the basis for the so-called leave-one-out analysis \citep{jaakkola1999probabilistic,joachims2000estimating,forster2002relative, ZhangTong2003}.
The basic idea is that the expected observed risk is equivalent to the expected leave-one-out error. Let $D_{n+1}=\{(\bx_{i},y_{i})\}_{i=1}^{n+1}$ be a random sample of size $n+1$, and let $D^{[i]}_{n+1}$ be the subset of $D_{n+1}$ with the $i$-th observation removed, i.e.
\[
D^{[i]}_{n+1}=\{(\mathbf{x}_1,y_1),\ldots,(\mathbf{x}_{i-1},y_{i-1}),(\mathbf{x}_{i+1},y_{i+1}),\ldots,(\mathbf{x}_{n+1},y_{n+1})\}.
\] 
Let $(\hat{f}^{[i]},\hat{\alpha}_0^{[i]})$ be the estimator trained on $D^{[i]}_{n+1}$. The leave-one-out error is defined as the averaged prediction error on each observation $(\mathbf{x}_i,y_i)$  using the estimator $(\hat{f}^{[i]},\hat{\alpha}_0^{[i]})$ computed from $D^{[i]}_{n+1}$, where $(\mathbf{x}_i,y_i)$ is excluded:
\[
\text{Leave-one-out error:}\quad\frac{1}{n+1}\sum^{n+1}_{i=1}\phi_{\omega}(y_i-\hat{\alpha}_{0}^{[i]}-\hat{f}^{[i]}(\bx_i)).
\]

\begin{lem}\label{lemma:leaveone}
Let $(\hat{f}_{(n)},\hat{\alpha}_{0\,(n)})$ be the KERE estimator trained from $D_{n}$. The expected observed risk $E_{D_{n}}E_{(\bx,y)}\phi_{\omega}(y-\hat{\alpha}_{0\,(n)}-\hat{f}_{(n)}(\bx))$
is equivalent to the expected leave-one-out error on $D_{n+1}$:
\begin{equation}
E_{D_{n}}\big\{E_{(\bx,y)}\phi_{\omega}(y-\hat{\alpha}_{0\,(n)}-\hat{f}_{(n)}(\bx))\big\} = E_{D_{n+1}}\Big(\frac{1}{n+1}\sum^{n+1}_{i=1}\phi_{\omega}(y_i-\hat{\alpha}_{0}^{[i]}-\hat{f}^{[i]}(\bx_i))\Big),\label{eq:leave_one_out}
\end{equation}
where $\hat{\alpha}_{0}^{[i]}$ and $\hat{f}^{[i]}$ are KERE trained from $D^{[i]}_{n+1}$.
\end{lem}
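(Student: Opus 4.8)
The plan is to exploit the exchangeability of the i.i.d. sample $D_{n+1}$ together with the fact that the KERE estimator is a symmetric (permutation-invariant) function of its training data. First I would observe that the optimization problem \eqref{eq:setup1} defining $(\hat{f},\hat{\alpha}_0)$ depends on the training observations only through the unordered sum $\sum_i \phi_\omega(y_i-\alpha_0-f(\bx_i))$ together with the kernel penalty $\lambda\|f\|_{\mathbb{H}_K}^2$, so relabeling the training points leaves the fitted estimator unchanged. Consequently, for any index $i$, the random variable $\phi_\omega(y_i-\hat{\alpha}_0^{[i]}-\hat{f}^{[i]}(\bx_i))$ is obtained from the $(n+1)$-th summand by swapping the roles of coordinates $i$ and $n+1$ in $D_{n+1}$.

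The key step is then to note that, because the $n+1$ observations are i.i.d., the joint law of $D_{n+1}$ is invariant under this coordinate swap. Combined with the permutation-invariance of the estimator, this shows that each of the $n+1$ summands on the right-hand side of \eqref{eq:leave_one_out} has exactly the same expectation. Hence I could replace the average $\tfrac{1}{n+1}\sum_{i=1}^{n+1}$ by a single representative term, say the one with $i=n+1$:
\[
E_{D_{n+1}}\Big(\frac{1}{n+1}\sum^{n+1}_{i=1}\phi_{\omega}(y_i-\hat{\alpha}_{0}^{[i]}-\hat{f}^{[i]}(\bx_i))\Big) = E_{D_{n+1}}\phi_{\omega}(y_{n+1}-\hat{\alpha}_{0}^{[n+1]}-\hat{f}^{[n+1]}(\bx_{n+1})).
\]

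Finally I would identify the right-hand side above with the expected observed risk. The reduced training set $D^{[n+1]}_{n+1}=\{(\bx_1,y_1),\ldots,(\bx_n,y_n)\}$ is an i.i.d. sample of size $n$ distributed exactly as $D_n$, while the held-out pair $(\bx_{n+1},y_{n+1})$ is independent of it and has the same distribution as a generic test point $(\bx,y)$. Writing the expectation over $D_{n+1}$ as an iterated expectation, the outer one over $D^{[n+1]}_{n+1}$ (which alone determines $(\hat{f}^{[n+1]},\hat{\alpha}_0^{[n+1]})$) and the inner one over $(\bx_{n+1},y_{n+1})$, the right-hand side becomes precisely $E_{D_n}\{E_{(\bx,y)}\phi_\omega(y-\hat{\alpha}_{0\,(n)}-\hat{f}_{(n)}(\bx))\}$, the left-hand side of \eqref{eq:leave_one_out}; the interchange is legitimate by independence together with the integrability guaranteed by $Ey^2<D<\infty$ and $M<\infty$. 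I expect the only delicate point to be the careful bookkeeping verifying that the estimator is genuinely symmetric in its inputs; once that invariance is in hand, the remainder is a relabeling argument and the stated identity follows immediately.
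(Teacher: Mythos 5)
Your proposal is correct and follows essentially the same route as the paper: the heart of both arguments is writing $E_{D_{n+1}}$ of each leave-one-out term as an iterated expectation, outer over $D^{[i]}_{n+1}$ (distributed as $D_n$) and inner over the held-out independent pair $(\bx_i,y_i)$, which identifies every summand with the expected observed risk. The only cosmetic difference is ordering: you first invoke exchangeability and permutation-invariance of the estimator to collapse the average to a single representative term and then apply the iterated expectation, whereas the paper applies the iterated-expectation identity to each of the $n+1$ terms directly (which makes the explicit symmetry check you flag unnecessary, since for i.i.d.\ data any ordered $n$-tuple of distinct observations already has the same law as $D_n$).
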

\begin{proof}
\begin{eqnarray*}
E_{D_{n+1}}\Big(\frac{1}{n+1}\sum_{i=1}^{n+1}\phi_{\omega}(y_{i}-\hat{\alpha}_{0}^{[i]}-\hat{f}^{[i]}(\bx_{i}))\Big) & = & \frac{1}{n+1}\sum_{i=1}^{n+1}E_{D_{n+1}}\phi_{\omega}(y_{i}-\hat{\alpha}_{0}^{[i]}-\hat{f}^{[i]}(\bx_{i}))\\
 & = & \frac{1}{n+1}\sum_{i=1}^{n+1}E_{D_{n+1}^{[i]}}\big\{E_{(\mathbf{x}_{i},y_{i})}\phi_{\omega}(y_{i}-\hat{\alpha}_{0}^{[i]}-\hat{f}^{[i]}(\bx_{i}))\big\}\\
 & = & \frac{1}{n+1}\sum_{i=1}^{n+1}E_{D_{n}}\big\{E_{(\bx,y)}\phi_{\omega}(y-\hat{\alpha}_{0}-\hat{f}(\bx))\big\}\\
 & = & E_{D_{n}}{E_{(\bx,y)}\phi_{\omega}(y-\hat{\alpha}_{0}-\hat{f}(\bx))}.
\end{eqnarray*}
\end{proof}

In the following Lemma, we give an upper bound of $|\hat{\alpha}_{i}|$ for $1\leq i\leq n$.
\begin{lem}\label{lemma:alphaibound}
Assume $M=\sup_{\bx}K(\bx,\bx)^{1/2}$.
Denote  as $(\hat{f}_{(n)},\hat{\alpha}_{0\,(n)})$ the KERE estimator in \eqref{eq:setup1}  trained on $n$ samples $D_{n}=\{(\bx_{i},y_{i})\}_{i=1}^{n}$.
The estimates $\hat{\alpha}_{i\,(n)}$ for $1\leq i\leq n$ are defined by $\hat{f}_{(n)}(\cdot)=\sum_{i=1}^{n}\hat{\alpha}_{i\,(n)}K(\bx_{i},\cdot)$.  Denote $\|Y_n\|_{2}=\sqrt{\sum_{i=1}^{n}y_{i}^{2}}$, $\frac{\|Y_{n}\|_1}{n}=\frac{1}{n}\sum_{i=1}^{n}|y_{i}|$, $q_{1}=\frac{\max(1-\omega,\omega)}{\min(1-\omega,\omega)}$, $q_{2}=\max(1-\omega,\omega)$. We claim that
\begin{align}\label{eq:est_alphai}
|\hat{\alpha}_{i\,(n)}|& \leq  \frac{ q_{2} }{\lambda}\Big(q_{1}  \frac{\|Y_{n}\|_1}{n}+ M(q_{1}+1)\sqrt{\frac{ q_{2}}{\lambda}}\|Y_n\|_{2}+|y_{i}|\Big),\quad \text{for\ } 1\leq i\leq n.
\end{align}
\end{lem}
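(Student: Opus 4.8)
The plan is to extract each coefficient $\hat{\alpha}_{i\,(n)}$ from the first-order optimality conditions already derived in the proof of Lemma~\ref{lemma:setup3}, and then control the resulting residual. Writing $r_i = y_i - \hat{\alpha}_{0\,(n)} - \hat{f}_{(n)}(\bx_i)$ for the optimal residual, equation~\eqref{eq:setup2_1} reads $2\lambda\hat{\alpha}_{i\,(n)} = \phi_{\omega}^{\prime}(r_i)$. Since $\phi_{\omega}^{\prime}(0)=0$, Lemma~\ref{lem:lips} applied with $b=0$ gives $|\phi_{\omega}^{\prime}(r_i)| \le 2\max(1-\omega,\omega)\,|r_i| = 2q_2|r_i|$, so that $|\hat{\alpha}_{i\,(n)}| \le (q_2/\lambda)\,|r_i|$. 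The whole problem is thereby reduced to bounding $|r_i| \le |y_i| + |\hat{\alpha}_{0\,(n)}| + |\hat{f}_{(n)}(\bx_i)|$ term by term and matching these three pieces against the three summands in the claimed bound.

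For the function-value term, I would first control the RKHS norm of $\hat{f}_{(n)}$ by a comparison argument: evaluating the objective in \eqref{eq:setup1} at the feasible point $(f,\alpha_0)=(0,0)$ and invoking optimality gives $\lambda\|\hat{f}_{(n)}\|_{\mathbb{H}_K}^2 \le \sum_{i=1}^n \phi_{\omega}(y_i) \le q_2\sum_{i=1}^n y_i^2 = q_2\|Y_n\|_2^2$, where the middle step uses $\phi_{\omega}(t)\le\max(1-\omega,\omega)\,t^2$. Hence $\|\hat{f}_{(n)}\|_{\mathbb{H}_K}\le\sqrt{q_2/\lambda}\,\|Y_n\|_2$. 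Lemma~\ref{lemma:norm_control} together with $\sqrt{K(\bx_i,\bx_i)}\le M$ and the norm identity \eqref{eq:sol2} then yields $|\hat{f}_{(n)}(\bx_i)| = |\sum_j\hat{\alpha}_{j\,(n)}K(\bx_i,\bx_j)| \le M\sqrt{q_2/\lambda}\,\|Y_n\|_2$, which is exactly the shape of the $\|Y_n\|_2$ summand.

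The intercept term $|\hat{\alpha}_{0\,(n)}|$ is the step I expect to be the main obstacle, since $\alpha_0$ is unregularized and cannot be read off a norm bound directly. Here I would use the stationarity condition \eqref{eq:setup2_2}, namely $\sum_{i=1}^n\phi_{\omega}^{\prime}(r_i)=0$. Writing $\phi_{\omega}^{\prime}(r_i)=2w_i r_i$ with the data-dependent weight $w_i\in\{1-\omega,\omega\}$ ($w_i=1-\omega$ when $r_i\le0$ and $w_i=\omega$ otherwise), the condition rearranges to $\sum_i w_i\big(y_i-\hat{f}_{(n)}(\bx_i)\big) = \hat{\alpha}_{0\,(n)}\sum_i w_i$, exhibiting $\hat{\alpha}_{0\,(n)}$ as a weighted average of the $y_i-\hat{f}_{(n)}(\bx_i)$. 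Bounding the weights by $w_i\le\max(1-\omega,\omega)$ in the numerator and $\sum_i w_i \ge n\min(1-\omega,\omega)$ in the denominator controls the weight ratio by $q_1/n$, giving $|\hat{\alpha}_{0\,(n)}| \le (q_1/n)\sum_i|y_i-\hat{f}_{(n)}(\bx_i)| \le q_1\big(\|Y_n\|_1/n + M\sqrt{q_2/\lambda}\,\|Y_n\|_2\big)$ after reusing the function-value bound from the previous paragraph.

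Substituting the bounds on $|\hat{\alpha}_{0\,(n)}|$ and $|\hat{f}_{(n)}(\bx_i)|$ into $|r_i|\le|y_i|+|\hat{\alpha}_{0\,(n)}|+|\hat{f}_{(n)}(\bx_i)|$ collects the two $M\sqrt{q_2/\lambda}\,\|Y_n\|_2$ contributions into the single coefficient $(q_1+1)$, and multiplying through by $q_2/\lambda$ reproduces \eqref{eq:est_alphai} exactly. The rest is purely bookkeeping of the constants $q_1$, $q_2$, and $M$, so no further estimate is required.
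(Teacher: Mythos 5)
Your proof is correct, and its overall architecture matches the paper's: bound $|\hat{f}_{(n)}(\bx_i)|$ by $M\sqrt{q_2/\lambda}\,\|Y_n\|_2$, bound the unregularized intercept through the first-order condition in $\alpha_0$ (your weighted-average argument with $w_i\in\{1-\omega,\omega\}$ is exactly the paper's argument with the weights $c_i=\big|\omega-I(\cdot)\big|$, including the same ratio bound $q_2/(n\min(1-\omega,\omega))=q_1/n$), and finally convert the residual bound into the coefficient bound via the stationarity condition \eqref{eq:setup2_1} together with $|\phi_{\omega}'(t)|\le 2q_2|t|$ from Lemma~\ref{lem:lips}. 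The one place you genuinely depart from the paper is the derivation of the function-value bound: the paper works in the dual, comparing the objective $g$ of \eqref{eq:gdef} at $\widehat{\balpha}$ against $g(0,\ldots,0)=0$ and completing a square involving the conjugate $\phi_{\omega}^{*}$ to get $\tfrac12\sum_{i,j}\hat{\alpha}_{i}\hat{\alpha}_{j}K(\bx_i,\bx_j)\le \tfrac{q_2}{2\lambda}\|Y_n\|_2^2$, whereas you compare the primal objective \eqref{eq:setup1} at the optimum against the feasible point $(f,\alpha_0)=(0,0)$, using $\phi_{\omega}(t)\le q_2t^2$, which yields the identical estimate $\lambda\|\hat{f}_{(n)}\|_{\mathbb{H}_K}^2\le q_2\|Y_n\|_2^2$; both are then fed through Lemma~\ref{lemma:norm_control} in the same way. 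Your route is more elementary, since it needs neither the conjugate function nor the dual problem of Lemma~\ref{lemma:setup3} for this step (the conditions \eqref{eq:setup2_1} and \eqref{eq:setup2_2} you do use are just primal stationarity); the paper's dual comparison, on the other hand, is of a piece with the $g$-function machinery (Lemma~\ref{lemma:bound_objective}) that is genuinely needed later in the leave-one-out analysis of Theorem~\ref{thm:asymptotic}. The constants agree exactly, so your bookkeeping reproduces \eqref{eq:est_alphai} verbatim.
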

\begin{proof}

The proof is as follows. The function $g$ is defined as in \eqref{eq:gdef}, then
\[
g(\hat{\alpha}_{1\,(n)}, \hat{\alpha}_{2\,(n)}, \ldots, \hat{\alpha}_{n\,(n)})
\leq g(0, 0, \ldots, 0) = 0,
\]
we have
\begin{eqnarray*}
\frac{1}{2}\sum_{i,j=1}^{n}\hat{\alpha}_{i\,(n)}\hat{\alpha}_{j\,(n)}K(\bx_{i},\bx_{j}) & \leq & \sum_{i=1}^{n}y_{i}\hat{\alpha}_{i\,(n)}-2\lambda\sum_{i=1}^{n}\phi_{\omega}^{*}(\hat{\alpha}_{i\,(n)})\\
 & \leq & -\frac{\lambda}{2 q_{2} }\sum_{i=1}^{n}\Big(\hat{\alpha}_{i\,(n)}-\frac{ q_{2} }{\lambda}y_i\Big)^2 + \frac{ q_{2} }{2\lambda}\sum_{i=1}^{n}y_{i}^{2}\\
  & \leq & \frac{ q_{2} }{2\lambda}\sum_{i=1}^{n}y_{i}^{2}.
\end{eqnarray*}
Applying Lemma~\ref{lemma:norm_control}, we have
\begin{equation}\label{eq:est_Kalpha}
\hat{f}_{(n)}(\bx_i)=\sum_{j=1}^{n}\hat{\alpha}_{j\,(n)}K(\bx_{i},\bx_{j})
\leq  M\sqrt{\frac{ q_{2} \,\sum_{i=1}^{n}y_{i}^{2}}{\lambda}}
= M \sqrt{\frac{ q_{2}}{\lambda}}\|Y_n\|_{2}.\end{equation}
By the definition in \eqref{eq:setup2}, $\hat{\alpha}_{0\,(n)}$ is given by
$
\argmin_{\alpha_0}\sum_{i=1}^{n}\phi_{\omega}\big(y_{i}-\alpha_{0}-\hat{f}_{(n)}(\bx_i)\big).$
By the first order condition
\[
\sum_{i=1}^{n}2\big|\omega-I(y_i-\hat{\alpha}_{0\,(n)}-\hat{f}_{(n)}(\mathbf{x}_i))\big|(y_i-\hat{\alpha}_{0\,(n)}-\hat{f}_{(n)}(\mathbf{x}_i))=0.
\]
Let $c_i = \big|\omega-I(y_i-\hat{\alpha}_{0\,(n)}-\hat{f}_{(n)}(\mathbf{x}_i))\big|$, we have $\min(1-\omega,\omega)\leq c_i\leq\max(1-\omega,\omega)$, hence
\begin{eqnarray*}
\Big|\Big(\sum_{i=1}^{n}c_{i}\Big)\hat{\alpha}_{0\,(n)}\Big| & = & \Big|\sum_{i=1}^{n}c_{i}(y_{i}-\hat{f}_{(n)}(\mathbf{x}_{i}))\Big|\leq\sum_{i=1}^{n}c_{i}(\big|y_{i}\big|+\big|\hat{f}_{(n)}(\mathbf{x}_{i})\big|)\\
 & \leq &  q_{2}  \Big(\sum_{i=1}^n |y_i|+nM \sqrt{\frac{ q_{2}}{\lambda}}\|Y_n\|_{2}\Big),
\end{eqnarray*}
and  we have
\begin{equation}\label{eq:est_alpha0}
|\hat{\alpha}_{0\,(n)}|\leq   q_{1}  \Big(\frac{\|Y_{n}\|_1}{n}+ M \sqrt{\frac{ q_{2}}{\lambda}}\|Y_n\|_{2}\Big).
\end{equation}
Combining \eqref{eq:setup2_1} and \eqref{eq:est_alpha0}, we concluded \eqref{eq:est_alphai}.
\end{proof}

\subsection{Proof of Theorem~\ref{thm:asymptotic}}

\begin{proof}  Consider $n+1$ training samples $D_{n+1}=\{(\mathbf{x}_1,y_1),\ldots,(\mathbf{x}_{n+1},y_{n+1})\}$. Denote as $(\hat{f}^{[i]},\hat{\alpha}_0^{[i]})$ the KERE estimator trained from $D^{[i]}_{n+1}$, which is a subset of $D_{n+1}$ with $i$-th observation removed, i.e., 
\[
D^{[i]}_{n+1}=\{(\mathbf{x}_1,y_1),\ldots,(\mathbf{x}_{i-1},y_{i-1}),(\mathbf{x}_{i+1},y_{i+1}),\ldots,(\mathbf{x}_{n+1},y_{n+1})\}.
\]
Denote  as $(\hat{f}_{(n+1)},\hat{\alpha}_{0\,(n+1)})$ the KERE estimator trained from $n+1$ samples $D_{n+1}$. The estimates  $\hat{\alpha}_{i}$ for $1\leq i\leq n+1$ are defined by $\hat{f}_{(n+1)}(\cdot)=\sum_{i=1}^{n+1}\hat{\alpha}_{i}K(\bx_{i},\cdot)$.

In what follows,  we denote $\|Y_{n+1}\|_{2}=\sqrt{\sum_{i=1}^{n+1}y_{i}^{2}}$, $\frac{\|Y_{n+1}\|_1}{n+1}=\frac{1}{n+1}\sum_{i=1}^{n+1}|y_{i}|$, $q_{1}=\frac{\max(1-\omega,\omega)}{\min(1-\omega,\omega)}$, $q_{2}=\max(1-\omega,\omega)$, $q_{3}=\min(1-\omega,\omega)$. 

\paragraph{Part I} 
We first show that the leave-one-out estimate is sufficiently close to the estimate fitted from using all the training data. 
Without loss of generality, just consider the case that the $(n+1)$th data point is removed. The same results apply to the other leave-one out cases.
We show that
$
|\hat{f}^{[n+1]}(\bx_{i})+\hat{\alpha}_0^{[n+1]}-\hat{f}_{(n+1)}(\bx_{i})-\hat{\alpha}_{0\,(n+1)}|\leq  C^{[n+1]}_{2},
$
where the expression of 
$
C^{[n+1]}_{2} 
$
is to be derived in the following.

We first study the upper bound for $|\hat{f}^{[n+1]}(\bx_i)-\hat{f}_{(n+1)}(\bx_i)|$.
By the definitions of $g$ in \eqref{eq:gdef} and $(\hat{\alpha}_{1}^{[n+1]},\hat{\alpha}_{2}^{[n+1]},\ldots,\hat{\alpha}_{n}^{[n+1]})$, we have
\begin{eqnarray*}
 &  & g\big(\hat{\alpha}_{1}^{[n+1]},\hat{\alpha}_{2}^{[n+1]},\ldots,\hat{\alpha}_{n}^{[n+1]},0\big)\\
 & = & g\big(\hat{\alpha}_{1}^{[n+1]},\hat{\alpha}_{2}^{[n+1]},\ldots,\hat{\alpha}_{n}^{[n+1]}\big)\\
 & \leq & g\big(\hat{\alpha}_{1}+\frac{1}{n}\hat{\alpha}_{n+1},\hat{\alpha}_{2}+\frac{1}{n}\hat{\alpha}_{n+1},\ldots,\hat{\alpha}_{n}+\frac{1}{n}\hat{\alpha}_{n+1}\big)\\
 & = & g\big(\hat{\alpha}_{1}+\frac{1}{n}\hat{\alpha}_{n+1},\hat{\alpha}_{2}+\frac{1}{n}\hat{\alpha}_{n+1},\ldots,\hat{\alpha}_{n}+\frac{1}{n}\hat{\alpha}_{n+1},0\big).
\end{eqnarray*}
That is,
\begin{align*}
&g\big(\hat{\alpha}_{1}^{[n+1]},\hat{\alpha}_{2}^{[n+1]},\ldots,\hat{\alpha}_{n}^{[n+1]},0\big)
-g\big(\hat{\alpha}_{1},\hat{\alpha}_{2},\ldots,\hat{\alpha}_{n+1}\big)\\
&\leq g\big(\hat{\alpha}_{1}+\frac{1}{n}\hat{\alpha}_{n+1},\hat{\alpha}_{2}+\frac{1}{n}\hat{\alpha}_{n+1},\ldots,\hat{\alpha}_{n}+\frac{1}{n}\hat{\alpha}_{n+1},0\big)
-g\big(\hat{\alpha}_{1},\hat{\alpha}_{2},\ldots,\hat{\alpha}_{n+1}\big).
\end{align*}
Denote for simplicity that $\hat{\alpha}_{n+1}^{[n+1]}=0$. Applying Lemma \ref{lemma:bound_objective} to both LHS and RHS of the above inequality, we have
\begin{align*}
 & \sum_{i,j=1}^{n+1}(\hat{\alpha}_{i}^{[n+1]}-\hat{\alpha}_{i})(\hat{\alpha}_{j}^{[n+1]}-\hat{\alpha}_{j})K(\bx_{i},\bx_{j})
+\frac{\lambda}{2 q_{2} }\sum_{i=1}^{n+1}(\hat{\alpha}_i^{[n+1]}-\hat{\alpha}_i)^2\\
\leq  & \hat{\alpha}_{n+1}^{2}\Big[\Big(\frac{1}{n},\ldots,\frac{1}{n},-1\Big)\bK\Big(\frac{1}{n},\ldots,\frac{1}{n},-1\Big)^{T}+\frac{\lambda(n+1)}{2n q_{3} }\Big],
\end{align*}
where $\bK\in\mathbb{R}^{n+1,n+1}$ is defined by $\bK_{i,j}=K(\bx_i,\bx_j)$.
Since $|K(\bx_i,\bx_j)|\leq M^{2}$ for any $1\leq i,j\leq n+1$, we have
\[
\begin{array}{ll}
&\Big(\frac{1}{n},\ldots, \frac{1}{n}, -1 \Big) \bK \Big(\frac{1}{n},\ldots, \frac{1}{n}, -1 \Big)^T\\=&\frac{1}{n^2}\sum_{i,j=1}^n \bK_{i,j}-\frac{1}{n}\sum_{i=1}^n\bK_{i,n+1}
-\frac{1}{n}\sum_{j=1}^n\bK_{n+1,j}+\bK_{n+1,n+1}
\\\leq& M^{2}+M^{2}+M^{2}+M^{2} = 4M^{2}.
\end{array}
\]
Combining it with the bound for $|\hat{\alpha}_{n+1}|$ by Lemma \ref{lemma:alphaibound} (note that here $\hat{\alpha}_{n+1}$ is trained on $n+1$ samples), we have
\begin{equation}\label{eq:firstineq}
\sum_{i,j=1}^{n+1}(\hat{\alpha}_i^{[n+1]}-\hat{\alpha}_i)(\hat{\alpha}_j^{[n+1]}-\hat{\alpha}_j)K(\bx_{i},\bx_{j})
\leq C^{[n+1]}_{1},
\end{equation}
where
\begin{align}
C^{[n+1]}_{1} &= \Bigg(4M^{2}+ \frac{\lambda(n+1)}{2n q_{3} }\Bigg)\Bigg(\frac{ q_{2} }{\lambda} C^{[n+1]}_{0}  \Bigg)^2, \label{eq:c0}
\end{align}
and
\begin{align}\label{eq:truec0}
C^{[n+1]}_{0} &= q_{1}  \frac{\|Y_{n+1}\|_1}{n+1}+ M(q_{1}+1)\sqrt{\frac{ q_{2}}{\lambda}}\|Y_{n+1}\|_{2}+|y_{n+1}|.
\end{align}
Combining \eqref{eq:firstineq} with Lemma~\ref{lemma:norm_control}, we have that for $1\leq i\leq n+1$,
\begin{equation}\label{eq:est_diff_alphaK}
|\hat{f}^{[n+1]}(\bx_i)-\hat{f}_{(n+1)}(\bx_i)|=
\Big|\sum_{j=1}^{n+1}(\hat{\alpha}_i^{[n+1]}-\hat{\alpha}_i)K(\bx_{i},\bx_{j})\Big|\leq \sqrt{C^{[n+1]}_{1}}M.
\end{equation}

Next, we bound $|\hat{\alpha}_0^{[n+1]}-\hat{\alpha}_{0\,(n+1)}|$. Since $\hat{\alpha}_{0\,(n+1)}$ and $\hat{\alpha}_0^{[n+1]}$ are the minimizers of
\[\text{$\sum_{i=1}^{n+1}\phi_{\omega}\left(y_{i}-\alpha_{0}-\hat{f}_{(n+1)}(\bx_i)\right)$
and $\sum_{i=1}^{n}\phi_{\omega}\left(y_{i}-\alpha_{0}-\hat{f}^{[n+1]}(\bx_i)\right)$},\]
we have
\begin{equation}\label{eq:a0firstorder}
\frac{\di }{\di \alpha_0}\sum_{i=1}^{n+1}\phi_{\omega}\left(y_{i}-\alpha_{0}-\hat{f}_{(n+1)}(\bx_i)\right)\Big|_{\alpha_0=\hat{\alpha}_{0\,(n+1)}}=0,
\end{equation}
and
\begin{equation}\label{eq:a0firstorder2}
\frac{\di }{\di \alpha_0}\sum_{i=1}^{n}\phi_{\omega}\left(y_{i}-\alpha_{0}-\hat{f}^{[n+1]}(\bx_i)\right)\Big|_{\alpha_0=\hat{\alpha}_0^{[n+1]}}
=0.
\end{equation}
By the Lipschitz  continuity of $\phi'_{\omega}$ we have
\[
\begin{array}{ll}
 & \Bigg|\sum_{i=1}^{n+1}\phi'_{\omega}\left(y_{i}-\hat{\alpha}_{0\,(n+1)}-\hat{f}^{[n+1]}(\bx_{i})\right)-\sum_{i=1}^{n+1}\phi'_{\omega}\left(y_{i}-\hat{\alpha}_{0\,(n+1)}-\hat{f}_{(n+1)}(\bx_{i})\right)\Bigg|\\
\leq & 2(n+1) q_{2} |\hat{f}^{[n+1]}(\bx_{i})-\hat{f}_{(n+1)}(\bx_{i})|,
\end{array}
\]
and by applying \eqref{eq:est_diff_alphaK} and \eqref{eq:a0firstorder} we have the upper bound
\[
\Bigg|\sum_{i=1}^{n+1}\phi'_{\omega}\left(y_{i}-\hat{\alpha}_{0\,(n+1)}-\hat{f}^{[n+1]}(\bx_{i})\right)\Bigg|
\leq {2(n+1) q_{2} }\, \sqrt{C^{[n+1]}_{1}}M.
\]
Similarly, by \eqref{eq:est_Kalpha}, \eqref{eq:est_alpha0}, and \eqref{eq:a0firstorder2} we have
\begin{equation}
\begin{array}{ll}
 & \Bigg|\sum_{i=1}^{n}\phi'_{\omega}\left(y_{i}-\hat{\alpha}_{0\,(n+1)}-\hat{f}^{[n+1]}(\bx_{i})\right)\Bigg|\\
  = & \Bigg|\sum_{i=1}^{n+1}\phi'_{\omega}\left(y_{i}-\hat{\alpha}_{0\,(n+1)}-\hat{f}^{[n+1]}(\bx_{i})\right)-\sum_{i=1}^{n+1}\phi'_{\omega}\left(y_{i}-\hat{\alpha}_{0\,(n+1)}-\hat{f}_{(n+1)}(\bx_{i})\right)\\
   & -\phi'_{\omega}\left(y_{n+1}-\hat{\alpha}_{0\,(n+1)}-\hat{f}^{[n+1]}(\bx_{n+1})\right)\Bigg|\\
  \leq & \Bigg|\sum_{i=1}^{n+1}\phi'_{\omega}\left(y_{i}-\hat{\alpha}_{0\,(n+1)}-\hat{f}^{[n+1]}(\bx_{i})\right)-\sum_{i=1}^{n+1}\phi'_{\omega}\left(y_{i}-\hat{\alpha}_{0\,(n+1)}-\hat{f}_{(n+1)}(\bx_{i})\right)\Bigg|\\
   & +\Bigg|\phi'_{\omega}\left(y_{n+1}-\hat{\alpha}_{0\,(n+1)}-\hat{f}^{[n+1]}(\bx_{n+1})\right)\Bigg|\\
  \leq & 2(n+1) q_{2} \sqrt{C^{[n+1]}_{1}}M+2 q_{2} \big(|y_{n+1}|+ |\hat{\alpha}_{0\,(n+1)}| + |\hat{f}_{(n)}| \big)\\
  \leq & 2(n+1) q_{2} \sqrt{C^{[n+1]}_{1}}M+2 q_{2} \big(|y_{n+1}|+q_{1}  \frac{\|Y_{n+1}\|_1}{n+1}+ Mq_{1}\sqrt{\frac{ q_{2}}{\lambda}}\|Y_{n+1}\|_{2}+\sqrt{\frac{ q_{2}}{\lambda}}\|Y_{n}\|_{2}\big)\\
    \leq & 2(n+1) q_{2} \sqrt{C^{[n+1]}_{1}}M+2 q_{2} C^{[n+1]}_{0},
\end{array}\label{eq:interbound}
\end{equation}
where the second last inequality follows from \eqref{eq:est_Kalpha} and \eqref{eq:est_alpha0}. Note that in this case the corresponding sample is $n+1$.

Using \eqref{eq:a0firstorder2} we have
\begin{eqnarray*}
 &  & 2n q_{3} \big|\hat{\alpha}_{0}^{[n+1]}-\hat{\alpha}_{0\,(n+1)}\big|\\
 & \leq & \Big|\sum_{i=1}^{n}\phi_{\omega}^{\prime}\left(y_{i}-\hat{\alpha}_{0}^{[n+1]}-\hat{f}^{[n+1]}(\bx_{i})\right)-\sum_{i=1}^{n}\phi_{\omega}^{\prime}\left(y_{i}-\hat{\alpha}_{0\,(n+1)}-\hat{f}^{[n+1]}(\bx_{i})\right)\Big|\\
 & = & \Big|\sum_{i=1}^{n}\phi_{\omega}^{\prime}\left(y_{i}-\hat{\alpha}_{0\,(n+1)}-\hat{f}^{[n+1]}(\bx_{i})\right)\Big|.
\end{eqnarray*}
By \eqref{eq:interbound}, we have
\begin{equation}\label{eq:est_diff_alpha0}
|\hat{\alpha}_{0}^{[n+1]}-\hat{\alpha}_{0\,(n+1)}|\leq q_{1} \Big((1+\frac{1}{n})\sqrt{C^{[n+1]}_{1}}M+\frac{1}{n}C^{[n+1]}_{0}\Big).
\end{equation}
Finally, combining \eqref{eq:est_diff_alphaK} and \eqref{eq:est_diff_alpha0} we have
\begin{equation}\label{eq:diff_function}
|\hat{f}^{[n+1]}(\bx_{i})+\hat{\alpha}_0^{[n+1]}-\hat{f}_{(n+1)}(\bx_{i})-\hat{\alpha}_{0\,(n+1)}|\leq  C^{[n+1]}_{2},
\end{equation}
where
\begin{equation}\label{eq:c222}
C^{[n+1]}_{2} =  q_{1} \Big((1+\frac{1}{n})\sqrt{C^{[n+1]}_{1}}M+\frac{1}{n}C^{[n+1]}_{0}\Big)+ \sqrt{C^{[n+1]}_{1}}M.  
\end{equation}

\paragraph{Part II}  We now use \eqref{eq:diff_function} to derive a bound for $\phi_{\omega}(y_{n+1}-\hat{\alpha}_{0}^{[n+1]}-\hat{f}^{[n+1]}(\bx_{n+1}))$. 
Let $t = \hat{f}_{(n+1)}(\bx_{i})+\hat{\alpha}_{0\,(n+1)}-\hat{f}^{[n+1]}(\bx_{i})-\hat{\alpha}_0^{[n+1]}$ and $t^{\prime}=y_{i}-\hat{\alpha}_{0\,(n+1)}-\hat{f}_{(n+1)}(\bx_{i})$. We claim that, 
\begin{equation}\phi_\omega(t+t^{\prime})-\phi_\omega(t^{\prime})\leq  q_{2} (|2tt^{\prime}|+|t^2|).\label{eq:phi_diff}
\end{equation} 
when $(t+t^{\prime})$ and $t^{\prime}$ are both positive or both negative, \eqref{eq:phi_diff} follows from $(t+t^{\prime})^2-t^{\prime 2}=2tt^{\prime}+t^2$. When $t+t^{\prime}$ and $t^{\prime}$ have different signs, it must be that $|t^{\prime}|<|t|$, and we have $|t|=|t+t^{\prime}|+|t^{\prime}|$ and hence $|t+t^{\prime}|<|t|$. Then \eqref{eq:phi_diff} is proved by $\phi_\omega(t+t^{\prime})-\phi_\omega(t^{\prime})=\max(\phi_\omega(t+t^{\prime}),\phi_\omega(t^{\prime}))\leq
 q_{2} \max((t+t^{\prime})^2,t^{\prime 2})\leq \max(1-\omega,\omega) t^2 < \max(1-\omega,\omega)(|2tt^{\prime}|+|t^2|)$.

Hence by \eqref{eq:diff_function}, \eqref{eq:phi_diff} and the upper bound of $|y_{n+1}-\hat{f}_{(n+1)}(\bx_{n+1})-\hat{\alpha}_{0\,(n+1)}|$, we have 
\begin{align}\label{eq:c22}
&\phi_{\omega}(y_{n+1}-\hat{\alpha}_{0}^{[n+1]}-\hat{f}^{[n+1]}(\bx_{n+1}))
\leq
\phi_{\omega}(y_{n+1}-\hat{\alpha}_{0\,(n+1)}-\hat{f}_{(n+1)}(\bx_{n+1}))+C^{[n+1]}_{3},
\end{align}
where
\begin{align}
C^{[n+1]}_{3}&={ q_{2} } (2C^{[n+1]}_{0}C^{[n+1]}_{2}+(C^{[n+1]}_{2})^{2}).\label{eq:c2}
\end{align}
Note that \eqref{eq:c22} and \eqref{eq:c2} hold for other $i,  1 \le i \le n+1$.
\begin{align}
&\phi_{\omega}(y_{i}-\hat{\alpha}_{0}^{[i]}-\hat{f}^{[i]}(\bx_{i}))
\leq
\phi_{\omega}(y_{i}-\hat{\alpha}_{0\,(n+1)}-\hat{f}_{(n+1)}(\bx_{i}))+C^{[i]}_{3}.
\end{align}
Hence by \eqref{eq:c0}, \eqref{eq:truec0}, \eqref{eq:c222} and \eqref{eq:c22} we have
\begin{eqnarray}\label{eq:c2results}
E_{D_{n+1}}\Big(\phi_{\omega}(y_{i}-\hat{\alpha}_{0}^{[i]}-\hat{f}^{[i]}(\bx_{i}))\Big)\leq E_{D_{n+1}}\Big(\phi_{\omega}(y_{i}-\hat{\alpha}_{0\,(i)}-\hat{f}_{(n+1)}(\bx_{i}))\Big)+E_{D_{n+1}}C_{3}^{[i]}.
\end{eqnarray}
and
\begin{eqnarray}
&&\frac{1}{n+1}E_{D_{n+1}}\Big(\sum_{i=1}^{n+1}\phi_{\omega}(y_{i}-\hat{\alpha}_{0}^{[i]}-\hat{f}^{[i]}(\bx_{i}))\Big)\nonumber\\ 
& \leq & 
\frac{1}{n+1}E_{D_{n+1}}\Big(\sum_{i=1}^{n+1}\phi_{\omega}(y_{i}-\hat{\alpha}_{0\,(n+1)}-\hat{f}_{(n+1)}(\bx_{i}))\Big) +\frac{1}{n+1}E_{D_{n+1}}\sum_{i=1}^{n+1}C_{3}^{[i]}. \label{eq:ed1}
 \end{eqnarray}

On the other hand,  let $(f^*_\varepsilon, \alpha_{0\,\varepsilon}^*)$ in the RKHS and satisfy ${\cal R}(f^*_\varepsilon, \alpha_{0\,\varepsilon}^*)\leq \inf_{f\in\mathbb{H}_K,\alpha_0\in\mathbb{R}}{\cal R}(f, \alpha_{0})+\varepsilon$. From the definition of $\hat{\alpha}_{0\,(n+1)}$ and $\hat{f}_{(n+1)}$ we have
 \begin{eqnarray}
&&\frac{1}{n+1}\Big(\sum_{i=1}^{n+1}\phi_{\omega}(y_{i}-\hat{\alpha}_{0\,(n+1)}-\hat{f}_{(n+1)}(\bx_{i}))\Big)+\frac{\lambda}{n+1}\|\hat{f}_{(n+1)}\|_{\mathbb{H}_{K}}^{2} \nonumber\\
& \leq & \frac{1}{n+1}\Big(\sum_{i=1}^{n+1}\phi_{\omega}(y_{i}-\alpha_{0\,\varepsilon}^{*}-f_{\varepsilon}^{*}(\bx_{i}))\Big)
  +\frac{\lambda}{n+1}\|f_{\varepsilon}^{*}\|_{\mathbb{H}_{K}}^{2}. \label{eq:ed2}
\end{eqnarray}
By Lemma \ref{lemma:leaveone},  \eqref{eq:ed1} and \eqref{eq:ed2}, we get
\begin{eqnarray}
&&E_{D_{n}}\big\{ E_{(\bx,y)}\phi_{\omega}(y-\hat{\alpha}_{0\,(n)}-\hat{f}_{(n)}(\bx))\big\} \nonumber \\
& = & \frac{1}{n+1}E_{D_{n+1}}\Big(\sum_{i=1}^{n+1}\phi_{\omega}(y_{i}-\hat{\alpha}_{0}^{[i]}-\hat{f}^{[i]}(\bx_{i}))\Big)\nonumber \\
& \leq & E_{D_{n}}\big\{ E_{(\bx,y)}\phi_{\omega}(y-\alpha_{0\,\varepsilon}^{*}-f_{\varepsilon}^{*}(\bx_{i}))\big\}+\frac{\lambda}{n+1}\|f_{\varepsilon}^{*}\|_{\mathbb{H}_{K}}^{2}+\frac{1}{n+1}E_{D_{n+1}}\sum_{i=1}^{n+1}C_{3}^{[i]} \nonumber \\
& \le &  \inf_{f\in\mathbb{H}_K,\alpha_0\in\mathbb{R}}{\cal R}(f, \alpha_{0})+\varepsilon+\frac{\lambda}{n+1}\|f_{\varepsilon}^{*}\|_{\mathbb{H}_{K}}^{2}+\frac{1}{n+1}E_{D_{n+1}}\sum_{i=1}^{n+1}C_{3}^{[i]}.\label{eq:secondfinal}
\end{eqnarray}
Because $\lambda/n\rightarrow 0$, there exists $N_{\varepsilon}$ such that when $n>N_{\varepsilon}$, $\frac{\lambda}{n+1}\|f_{\varepsilon}^{*}\|_{\mathbb{H}_{K}}^{2} \le \varepsilon$.
In what follows, we show that there exists $N'_{\varepsilon}$ such that when $n>N'_{\varepsilon}$, $\frac{1}{n+1}E_{D_{n+1}}\sum_{i=1}^{n+1}C_{3}^{[i]} \le \varepsilon$. Thus, when $n>\max(N_{\varepsilon},N'_{\varepsilon})$ we have
$$
E_{D_{n}}\big\{ E_{(\bx,y)}\phi_{\omega}(y-\hat{\alpha}_{0\,(n)}-\hat{f}_{(n)}(\bx))\big\} \le \inf_{f\in\mathbb{H}_K,\alpha_0\in\mathbb{R}}{\cal R}(f, \alpha_{0})+3\varepsilon.
$$
Since it holds for any $\varepsilon>0$, Theorem~\ref{thm:asymptotic} will be proved.

Now we only need to show  that $\frac{1}{n+1}E_{D_{n+1}}\sum_{i=1}^{n+1}C_{3}^{[i]}\rightarrow 0$ as $n\rightarrow\infty$.
In fact we can show $\frac{1}{n+1}E_{D_{n+1}}\sum_{i=1}^{n+1}C_{3}^{[i]} \le \frac{C}{\sqrt{\lambda}}D\left(\frac{1+n}{\lambda}+1\right)\rightarrow 0$ as $n\rightarrow\infty$.
 In the following analysis, $C$ represents any constant that does not depend on $n$, but the value of $C$ varies in different expressions. 
Let $V_{i}=q_{1}  \frac{\|Y_{n+1}\|_1}{n+1}+ M(q_{1}+1)\sqrt{\frac{ q_{2}}{\lambda}}\|Y_{n+1}\|_{2}+|y_{i}|$, 
then  as $n\rightarrow\infty$, $4M^2<\frac{\lambda(n+1)}{2n q_{3} }$, and we have the upper bound
\[
C_{1}^{[i]}<(C\lambda)\Big(\frac{C}{\lambda}V_{i}\Big)^2=C\frac{V_{i}^{2}}{\lambda},
\]
and since $n>\sqrt{\lambda}$ asymptotically, we have
\[
C_{2}^{[i]}  <  C\Big(C\sqrt{C_{1}^{[i]}}+\frac{V_{i}}{n}\Big)+C\sqrt{C_{1}^{[i]}}<C\frac{V_{i}}{\sqrt{\lambda}}+C\frac{V_{i}}{n}<C\frac{V_{i}}{\sqrt{\lambda}}.
\]
Then 
\begin{equation}\label{eq:EC2} 
C_{3}^{[i]} < CV_{i}C_{2}^{[i]}+CC_{2}^{[i]\,2} < C V_{i} \frac{V_{i}}{\sqrt{\lambda}}+ C \frac{V_{i}^{2}}{{\lambda}}<C  \frac{V_{i}^{2}}{\sqrt{\lambda}}.
\end{equation}
We can bound $V_{i}$ as follows:
\begin{eqnarray*}
V_{i} & = & q_{1}  \frac{\|Y_{n+1}\|_1}{n+1}+ M(q_{1}+1)\sqrt{\frac{ q_{2}}{\lambda}}\|Y_{n+1}\|_{2}+|y_{i}|\\
 & < & q_{1}{\frac{\|Y_{n+1}\|_{2}}{\sqrt{n+1}}}+M(q_{1}+1)\sqrt{\frac{ q_{2}}{\lambda}}\|Y_{n+1}\|_{2}+|y_{i}|\\
 & < & C\sqrt{\frac{\|Y_{n+1}\|^2_{2}}{\lambda}}+C|y_{i}|.
\end{eqnarray*}
Then we have 
\begin{eqnarray}\label{eq:EC22} 
E_{D_{n+1}}V_{i}^{2} & < & 2C^2 E_{D_{n+1}}\Big[{\frac{\|Y_{n+1}\|^2_{2}}{\lambda}}+y^2_{i}\Big].
\end{eqnarray}
Combining it with \eqref{eq:EC2} and using the assumption $ E{y}^2_i<D$, we have
\begin{eqnarray*}
\frac{1}{n+1}E_{D_{n+1}}\sum_{i=1}^{n+1}C_{3}^{[i]} & \le & \frac{C}{\sqrt{\lambda}}\frac{1}{1+n}\left(\frac{1+n}{\lambda}E\|Y_{n+1}\|_{2}^{2}+E\|Y_{n+1}\|_{2}^{2}\right)\\
 & \le & \frac{C}{\sqrt{\lambda}}\frac{E\|Y_{n+1}\|_{2}^{2}}{1+n}\left(\frac{1+n}{\lambda}+1\right)\leq\frac{C}{\sqrt{\lambda}}D\left(\frac{1+n}{\lambda}+1\right)
\end{eqnarray*}
So when $\lambda/n^{2/3} \rightarrow \infty$ we have $\frac{1}{n+1}E_{D_{n+1}}\sum_{i=1}^{n+1}C_{3}^{[i]}  \rightarrow 0$.

This completes the proof of Theorem~\ref{thm:asymptotic}.
\end{proof}

\subsection{Proof of Lemma~\ref{lem:lips}}

\begin{proof}

We observe that the difference of the first derivatives for the function $\phi_\omega$ satisfies
\[
|\phi_{\omega}'(a)-\phi_{\omega}'(b)|=
\begin{cases}
2(1-\omega)|a-b| & \mathrm{if}\quad(a\leq0, b\leq 0),\\
2\omega|a-b| & \mathrm{if}\quad(a>0, b>0),\\
2|(1-\omega)a-\omega b| & \mathrm{if}\quad(a\leq0, b>0),\\
2|\omega a-(1-\omega) b| & \mathrm{if}\quad(a>0, b\leq 0).
\end{cases}
\]
Therefore we have
\begin{equation}\label{Lips1}
\vert \phi^{'}_{\omega}(a)-\phi^{'}_{\omega}(b)\vert \le L|a-b|  \quad \forall a,b,
\end{equation}
where $L=2\max(1-\omega,\omega)$. By the Lipschitz continuity of $\phi^{\prime}_{\omega}$ and Cauchy-Schwarz
inequality,
\begin{equation}
(\phi^{\prime}_{\omega}(a)-\phi^{\prime}_{\omega}(b))(a-b)\leq L|a-b|^{2}\qquad\forall a,b\in\mathbb{R}.\label{eq:monotonicity}
\end{equation}
If we let $\varphi_{\omega}(a)=(L/2)a^{2}-\phi_{\omega}(a)$, then \eqref{eq:monotonicity}
implies the monotonicity of the gradient $\varphi_{\omega}^{\prime}(a)=La-\phi^{\prime}_{\omega}(a)$.
Therefore $\varphi$ is a convex function and by the first order condition
for convexity of $\varphi_{\omega}$:
\[
\varphi_{\omega}(a)\geq\varphi_{\omega}(b)+\varphi_{\omega}^{\prime}(b)(a-b)\qquad\forall a,b\in\mathbb{R},
\]
which is equivalent to \eqref{eq:upper_bound}.
\end{proof}

\subsection{Proof of Lemma~\ref{lem:convergence}}

\begin{proof}
1. By the definition of the majorization function and the fact that
$\balpha^{(k+1)}$ is the minimizer in \eqref{eq:majobj}
\[
F_{\omega, \lambda}(\balpha^{(k+1)})\leq Q(\balpha^{(k+1)}\mid\balpha^{(k)})\leq Q(\balpha^{(k)}\mid\balpha^{(k)})=F_{\omega, \lambda}(\balpha^{(k)}).
\]

2. Based on \eqref{eq:alter_qbound} and the fact that $Q$ is continuous,
bounded below and strictly convex, we have
\begin{equation}
\mathbf{0}=\nabla Q(\balpha^{(k+1)}\mid\balpha^{(k)})=\nabla F_{\omega, \lambda}(\balpha^{(k)})+2\bK_{u}(\balpha^{(k+1)}-\balpha^{(k)}).\label{eq:trick}
\end{equation}
Hence
\begin{align*}
F_{\omega, \lambda}(\balpha^{(k+1)}) & \leq Q(\balpha^{(k+1)}\mid\balpha^{(k)})\\
 & =F_{\omega, \lambda}(\balpha^{(k)})+\nabla F_{\omega, \lambda}(\balpha^{(k)})(\balpha^{(k+1)}-\balpha^{(k)})+(\balpha^{(k+1)}-\balpha^{(k)})^{\intercal}\bK_{u}(\balpha^{(k+1)}-\balpha^{(k)})\\
 & =F_{\omega, \lambda}(\balpha^{(k)})-(\balpha^{(k+1)}-\balpha^{(k)})^{\intercal}\bK_{u}(\balpha^{(k+1)}-\balpha^{(k)}).
\end{align*}
By \eqref{eq:ku} and the assumption that $\sum_{i=1}^{n}\bK_{i}\bK_{i}^{\intercal}$
is positive definite, we see that $\mathbf{K}_{u}$ is also positive
definite. Let $\gamma_{\min}(\mathbf{K}_{u})$ be the smallest eigenvalue
of $\mathbf{K}_{u}$ then
\begin{equation}
0\leq\gamma_{\min}(\mathbf{K}_{u})\|\balpha^{(k+1)}-\balpha^{(k)}\|^{2}\leq(\balpha^{(k+1)}-\balpha^{(k)})^{\intercal}\bK_{u}(\balpha^{(k+1)}-\balpha^{(k)})\leq F_{\omega, \lambda}(\balpha^{(k)})-F_{\omega, \lambda}(\balpha^{(k+1)}).\label{eq:sandwitch}
\end{equation}
Since $F$ is bounded below and monotonically decreasing as shown
in Proof 1, $F_{\omega, \lambda}(\balpha^{(k)})-F_{\omega, \lambda}(\balpha^{(k+1)})$ converges to zero
as $k\rightarrow\infty$, from \eqref{eq:sandwitch} we see that $\lim_{k\rightarrow\infty}\|\balpha^{(k+1)}-\balpha^{(k)}\|=0$.

3. Now we show that the sequence $(\balpha^{(k)})$ converges to the
unique global minimum of \eqref{eq:obj}. As shown  in Proof 1,  the
sequence $(F_{\omega, \lambda}(\balpha^{(k)}))$ is monotonically decreasing, hence
is  bounded above. The fact that $(F_{\omega, \lambda}(\balpha^{(k)}))$ is bounded
implies that $(\balpha^{(k)})$ must also be bounded, that is because
$\lim_{\balpha\rightarrow\infty}F_{\omega, \lambda}(\balpha)=\infty$. We next show
that the limit of any convergent subsequence of $(\balpha^{(k)})$
is a stationary point of $F$. Let $(\balpha^{(k_{i})})$ be the subsequence
of $(\balpha^{(k)})$ and let $\lim_{i\rightarrow\infty}\balpha^{(k_{i})}=\widehat{\balpha}$,
then by \eqref{eq:trick}
\[
\mathbf{0}=\nabla Q(\balpha^{(k_{i}+1)}\mid\balpha^{(k_{i})})=\nabla F_{\omega, \lambda}(\balpha^{(k_{i})})+2\bK_{u}(\balpha^{(k_{i}+1)}-\balpha^{(k_{i})}).
\]
Taking limits on both sides, we prove that $\widehat{\balpha}$ is
a stationary point of $F$.
\begin{align*}
\mathbf{0} & =\lim_{i\rightarrow\infty}\nabla Q(\balpha^{(k_{i}+1)}\mid\balpha^{(k_{i})})=\nabla Q(\lim_{i\rightarrow\infty}\balpha^{(k_{i}+1)}\mid\lim_{i\rightarrow\infty}\balpha^{(k_{i})}).\\
 & =\nabla F_{\omega, \lambda}(\widehat{\balpha})+2\bK_{u}(\widehat{\balpha}-\widehat{\balpha})=\nabla F_{\omega, \lambda}(\widehat{\balpha}).
\end{align*}
Then by the strict convexity of $F$, we have that $\widehat{\balpha}$
is the unique global minimum of \eqref{eq:obj}.
\end{proof}

\subsection{Proof of Theorem~\ref{thm:iteration}}

\begin{proof}
1. By \eqref{eq:mathdef1} and \eqref{eq:majobj},
\begin{equation}
F_{\omega, \lambda}(\balpha^{(k+1)})\leq Q(\balpha^{(k+1)}\mid\balpha^{(k)})\leq Q(\Lambda_{k}\balpha^{(k)}+(1-\Lambda_{k})\widehat{\balpha}\mid\balpha^{(k)}).\label{eq:in1}
\end{equation}
Using \eqref{eq:LambdaK} we can show that
\begin{align}
 & Q(\Lambda_{k}\balpha^{(k)}+(1-\Lambda_{k})\widehat{\balpha}\mid\balpha^{(k)})\nonumber \\
= & F_{\omega, \lambda}(\balpha^{(k)})+(1-\Lambda_{k})\nabla F_{\omega, \lambda}(\balpha^{(k)})(\widehat{\balpha}-\balpha^{(k)})+(1-\Lambda_{k})^{2}(\widehat{\balpha}-\balpha^{(k)})^{\intercal}\bK_{u}(\widehat{\balpha}-\balpha^{(k)})\nonumber \\
= & \Lambda_{k}F_{\omega, \lambda}(\balpha^{(k)})+(1-\Lambda_{k})\left[Q(\widehat{\balpha}\mid\balpha^{(k)})-\Lambda_{k}(\widehat{\balpha}-\balpha^{(k)})^{\intercal}\bK_{u}(\widehat{\balpha}-\balpha^{(k)})\right]\nonumber \\
= & \Lambda_{k}F_{\omega, \lambda}(\balpha^{(k)})+(1-\Lambda_{k})F_{\omega, \lambda}(\widehat{\balpha}).\label{eq:in2}
\end{align}
Then the statement can be proved by substituting \eqref{eq:in2} into
\eqref{eq:in1}.

2. We obtain a lower bound for $F_{\omega, \lambda}(\widehat{\balpha})$
\begin{equation}
F_{\omega, \lambda}(\widehat{\balpha})\geq F_{\omega, \lambda}(\balpha^{(k)})+\nabla F_{\omega, \lambda}(\balpha^{(k)})(\widehat{\balpha}-\balpha^{(k)})+(\widehat{\balpha}-\balpha^{(k)})^{\intercal}\bK_{l}(\widehat{\balpha}-\balpha^{(k)}),\label{eq:lbound}
\end{equation}
and majorization $Q(\widehat{\balpha}\mid\balpha^{(k)})$
\begin{equation}
Q(\widehat{\balpha}\mid\balpha^{(k)})=F_{\omega, \lambda}(\balpha^{(k)})+\nabla F_{\omega, \lambda}(\balpha^{(k)})(\widehat{\balpha}-\balpha^{(k)})+(\widehat{\balpha}-\balpha^{(k)})^{\intercal}\bK_{u}(\widehat{\balpha}-\balpha^{(k)}).\label{eq:major}
\end{equation}
Subtract \eqref{eq:lbound} from \eqref{eq:major} and divide by $(\widehat{\balpha}-\balpha^{(k)})^{\intercal}\bK_{u}(\widehat{\balpha}-\balpha^{(k)})$,
we have
\begin{align}
\Lambda_{k} & =\frac{Q(\widehat{\balpha}\mid\balpha^{(k)})-F_{\omega, \lambda}(\widehat{\balpha})}{(\widehat{\balpha}-\balpha^{(k)})^{\intercal}\bK_{u}(\widehat{\balpha}-\balpha^{(k)})}\nonumber \\
 & \leq1-\frac{(\widehat{\balpha}-\balpha^{(k)})^{\intercal}\bK_{l}(\widehat{\balpha}-\balpha^{(k)})}{(\widehat{\balpha}-\balpha^{(k)})^{\intercal}\bK_{u}(\widehat{\balpha}-\balpha^{(k)})}\nonumber \\
 & \leq1-\gamma_{\min}(\bK_{u}^{-1}\bK_{l}).\label{eq:fineq}
\end{align}
Both $K_{u}$ and $K_{l}$ are positive definite by the assumption
that $\sum_{i=1}^{n}\bK_{i}\bK_{i}^{\intercal}$ is positive definite,
and since
\[
\bK_{u}^{-1}\bK_{l}=\bK_{u}^{-\frac{1}{2}}\bK_{u}^{-\frac{1}{2}}\bK_{l}\bK_{u}^{-\frac{1}{2}}\bK_{u}^{\frac{1}{2}},
\]
the matrix $\bK_{u}^{-1}\bK_{l}$ is similar to the matrix $\bK_{u}^{-\frac{1}{2}}\bK_{l}\bK_{u}^{-\frac{1}{2}}$,
which is positive definite. Hence
\[
\Gamma=1-\gamma_{\min}(\bK_{u}^{-1}\bK_{l})=1-\gamma_{\min}(\bK_{u}^{-\frac{1}{2}}\bK_{l}\bK_{u}^{-\frac{1}{2}})<1.
\]
By \eqref{eq:mathdef1} and \eqref{eq:fineq} we showed that $0\leq\Lambda_{k}\leq\Gamma<1$.

3. Since $\nabla F_{\omega, \lambda}(\widehat{\balpha})=\mathbf{0}$, using the Taylor
expansion on $F_{\omega, \lambda}(\balpha^{(k)})$ at $\widehat{\balpha}$, we have
\[
F_{\omega, \lambda}(\balpha^{(k)})-F_{\omega, \lambda}(\widehat{\balpha})\geq(\balpha^{(k)}-\widehat{\balpha})^{\intercal}\bK_{l}(\balpha^{(k)}-\widehat{\balpha})\geq\gamma_{\min}(\mathbf{K}_{l})\|\balpha^{(k)}-\widehat{\balpha}\|^{2},
\] 
\[
F_{\omega, \lambda}(\balpha^{(k)})-F_{\omega, \lambda}(\widehat{\balpha})\leq(\balpha^{(k)}-\widehat{\balpha})^{\intercal}\bK_{u}(\balpha^{(k)}-\widehat{\balpha})\leq\gamma_{\max}(\mathbf{K}_{u})\|\balpha^{(k)}-\widehat{\balpha}\|^{2}.
\]
Therefore, by Results 1 and 2 
\[
\|\balpha^{(k+1)}-\widehat{\balpha}\|^{2}\leq\frac{F_{\omega, \lambda}(\balpha^{(k+1)})-F_{\omega, \lambda}(\widehat{\balpha})}{\gamma_{\min}(\mathbf{K}_{l})}\leq\frac{\Gamma(F_{\omega, \lambda}(\balpha^{(k)})-F_{\omega, \lambda}(\widehat{\balpha}))}{\gamma_{\min}(\mathbf{K}_{l})}\leq\Gamma\frac{\gamma_{\max}(\mathbf{K}_{u})}{\gamma_{\min}(\mathbf{K}_{l})}\|\balpha^{(k)}-\widehat{\balpha}\|^{2}.
\] 
 
\end{proof}

\bibliographystyle{asa}
\bibliography{kere}

\end{document}